\newtheoremstyle{break}
  {\topsep}{\topsep}%
  {\itshape}{}%
  {\bfseries}{}%
  {\newline}{}%
\newtheorem{theorem}{Theorem}
\newtheorem{definition}{Definition}
\theoremstyle{break}
\newtheorem{experiment}{Experiment}
\title{MIA Defense}
\author{
}
\title{My Publication Title --- Single Author}
\author {
    % Author
    Author Name \\
}
\title{Improving Robustness to Model Inversion Attacks via Mutual Information Regularization}
\author{
        Tianhao Wang\textsuperscript{\rm 1}\thanks{Equal Contribution},
        Yuheng Zhang\textsuperscript{\rm 2}\footnotemark[1],
        Ruoxi Jia\textsuperscript{\rm 3} \\
}
\long\def\ignorethis#1{}
\begin{document}
%\linenumbers

\maketitle

\begin{abstract}
This paper studies defense mechanisms against model inversion (MI) attacks -- a type of privacy attacks aimed at inferring information about the training data distribution given the access to a target machine learning model. Existing defense mechanisms rely on model-specific heuristics or noise injection. While being able to mitigate attacks, existing methods significantly hinder model performance. There remains a question of how to design a defense mechanism that is applicable to a variety of models and achieves better utility-privacy tradeoff.

In this paper, we propose the \textbf{M}utual \textbf{I}nformation Regularization based \textbf{D}efense (MID) against MI attacks. The key idea is to limit the information about the model input contained in the prediction, thereby limiting the ability of an adversary to infer the private training attributes from the model prediction. Our defense principle is model-agnostic and we present tractable approximations to the regularizer for linear regression, decision trees, and neural networks, which have been successfully attacked by prior work if not attached with any defenses.
We present a formal study of MI attacks by devising a rigorous game-based definition and quantifying the associated information leakage. 
Our theoretical analysis sheds light on the inefficacy of DP in defending against MI attacks, which has been empirically observed in several prior works. Our experiments demonstrate that MID leads to state-of-the-art performance for a variety of MI attacks, target models and datasets.
% , \textcolor{blue}{and achieves significantly better utility-privacy tradeoff than differential privacy}. 

% We initiate the study of defending against MI attack, in which the access to a machine learning model is abused to infer information about not only training data, but also the entire underlying distribution in certain settings. 
% Unlike sample privacy, the risks of leaking information about sensitive distributions, e.g. the face images of CIA agents, are typically overlooked by research community. 
% We formalize the notion of distributional privacy in terms of MI attack performance. 
% We propose MID, a general defense algorithm against MI attack by mutual information regularization. 
% We empirically show that MID can achieve better utility-privacy trade-off, compared with differential privacy. 
% To the best of our knowledge, this is the first formal discussion about preserving distributional privacy in the existing literatures.  
\end{abstract}

\section{Introduction}
Machine learning (ML) techniques have revolutionized many fields, such as computer vision, natural language processing, and robotics. However, the application of ML to domains involving sensitive and proprietary datasets is currently hindered by potential privacy threats. Recent studies have demonstrated various privacy attacks, which can expose the information about private training data through the access to a target model. The access could either be whitebox or blackbox. In the whitebox setting, the adversary has complete knowledge of the target model, whereas in the blackbox setting, the adversary is just allowed to make prediction queries against the model. Both settings can find concrete examples in today's ML-as-a-service platforms, such as Tensorflow Hub which offers a library of models that users can download and Microsoft Azure Cognitive Services which produce predictions for user-input data.

This paper focuses on MI attacks, a type of privacy attacks aimed at reconstructing the input associated with any given model output. MI attacks have been used to recover images of any person from a face recognition model given just
their name~\citep{zhang2019secret} and infer the genetic markers of individuals based on the medicine dosage prediction and some demographic information~\citep{fredrikson2014privacy}. 

Existing defense mechanisms against MI attacks can be categorized into two threads. One thread studies model-specific heuristics to mitigate attacks. For example, for decision trees, \citet{fredrikson2015model} studied the relationship between a sensitive feature's depth in the tree and the model's susceptibility to MI attacks and provided guidance for placing the sensitive feature in the tree. Although simple and efficient to implement, these heuristics cannot be easily generalized to a broader class of models and often only provide very limited protection against attacks. Another thread studies generic defense strategies that can be applied to any models. 
%An example of such strategies is to ensure differential privacy target models by injecting random noise to the parameters. 
An example of such strategies is to train differentially private ML models.
However, prior work~\citep{fredrikson2014privacy} has empirically shown that DP can mitigate the MI attack only when the injected noise is large enough and as a side effect, the model suffers significant performance degradation. 
There still remains a question of how to design a defense mechanism that is applicable to a variety of models and achieves better utility-privacy tradeoff.

In this paper, we present a defense mechanism against MI attacks, called MID, which can achieve the state-of-the art performance for a variety of target models, datasets, and attack algorithms in both blackbox and whitebox settings. Since MI adversaries exploit the correlation between the model input and output for successful attacks, our idea for the defense is to limit the redundant information about the model input contained in the prediction. Algorithmically, we introduce a regularizer into the training loss, which penalizes the mutual information between the model input and the prediction. We present tractable approximations of the regularizer for all the models that have been successfully attacked before, which include linear regression, decision trees and neural networks. 

In addition, we provide a formal study of MI attacks and defenses. Particularly, existing theoretical framework for MI attacks focuses only on the privacy implications to training data while neglecting the fact that the attack also breaches privacy of other data from the same distribution. We take a first step toward formalizing the distributional privacy risk of MI attacks. 
%We start by devising a rigorous game-based definition and quantifying the information leakage. We then derive the bounds of the information leakage associated with the our proposed defense and differential privacy. Our theoretical analysis justifies the use of MID and sheds light on the inefficacy of differential privacy in defending against inversion attacks. 
Our theoretical analysis provides insights into the phenomenon of the inefficacy of DP in defending MI attacks observed in previous works.
We evaluate our defense mechanism for various target models, datasets, and attack algorithms, and demonstrate the superiority of our defense to the existing methods in terms of utility-privacy tradeoff.

\section{Related Work}
%[yuheng] privacy attack: membership and MI recent work on MI attacks: black box and white box 

\paragraph{Attack Algorithms.} The first MI attack was demonstrated in \citep{fredrikson2014privacy}, where the authors presented a general algorithm to recover training data associated with an arbitrary model output given the model and some auxiliary information about the training set. The general idea of the algorithm is to formulate the MI attack as an optimization problem seeking for the sensitive input that achieves the maximum likelihood or posterior probability for the given model output and auxiliary information. \citet{fredrikson2014privacy} applied the algorithm to recover genetic markers given the linear regression model that uses them as part of input features. They also found that MI attacks are able to recover sensitive attributes for not only training data but also test data drawn independently from domain distribution.
% This is a key difference between MI attack and the classic \emph{reconstruction attack} in database context \cite{dwork2017exposed}.
\citet{fredrikson2015model} discussed the application of the general MI attack algorithm to more complex models including decision trees and some shallow neural networks. \citet{zhang2019secret} proposed a whitebox attack algorithm that can distill generic knowledge from public data and leverage it to improve the realism of reconstructed images for deep neural networks.
% makes use of prior knowledge to reconstruct more realistic images from the target distribution. 
% in the context of face recognition. In the case of examples like face recognition, the information leakage of training images and test images are equally harmful. 
\citet{yang2019adversarial} focused on the blackbox setting and proposed to train a separate model that swaps the input and output of the target model to perform MI attacks. 
% The inversion model can be trained with black-box accesses to the target model. 
\citet{salem2019updates} studied the blackbox MI attacks for online learning, where the attacker has access to the versions of the target model before and after an online update and the goal is to recover the training data used to perform the update. They proposed to train a shadow model to mimic the target model, and then trains a separate model that transforms the change of the target model output in two online learning iterations into the private attributes. 

\paragraph{Formalization of MI attacks. } Several recent works started to formalize MI attacks and study the factors that affect a model’s vulnerability from a theoretical viewpoint. 
\citet{wu2016methodology} characterized model invertibility for Boolean functions using the concept of influence from Boolean analysis; \citet{yeom2018privacy} formalized the risk that the model poses specifically to individuals in the training data and showed that the risk increases with the degree of model overfitting. Both works define the privacy loss of MI attacks as the extent of information leakage of training data exceeds that of test data. On the contrary, our paper recognizes the privacy loss suffered by the test data drawn from the same distribution as training data and presents a formalism of MI attacks that allows for the analysis of distributional privacy leakage. 
% which is fundamentally different from the distributional privacy concern in this paper.

\paragraph{Defenses.} There are very few studies about defenses against MI attacks. One idea is to use DP~\citep{fredrikson2014privacy}. DP provides a theoretical guarantee for training data privacy, but is not meant to protect the entire distribution. \citet{zhang2019secret} and \citet{fredrikson2014privacy} observed through empirical studies that DP cannot provide protection against MI attacks with reasonable model utility. Our paper presents a theoretical analysis that can explain the inefficacy of DP. Other defense ideas are model-specific. For instance, \citet{fredrikson2015model} proposed to place sensitive features at a particular depth to improve the robustness of decision trees. There also exists defenses designed specifically for blackbox attacks on neural networks, such as injecting uniform noise to confidence scores~\citep{salem2019updates}, reducing their precision~\citep{fredrikson2015model} or dispersion~\citep{yang2020defending}. Our defense differs from existing methods in that our approach is model-agnostic and applicable to both whitebox and blackbox settings. Also, we will show that our approach can achieve better utility-privacy tradeoff than the existing methods.

\newcommand{\mechanism}{\mathcal{M}}
\newcommand{\analyst}{\mathcal{A}}
\newcommand{\maxinfo}{I_{\infty}}
\newcommand{\eeps}{e^{\epsilon}}

\newcommand{\classifier}{f}
\newcommand{\targetdist}{p}
\newcommand{\unidist}{u}

\newcommand{\tgtind}{\targetdist_{X_s} \times \targetdist_{X_{ns}, Y}}

\newcommand{\unirv}{U}
\newcommand{\expsem}{\mathbf{Exp^{SEM}}(\analyst, \mechanism, \targetdist, \tau)}
\newcommand{\expind}{\mathbf{Exp^{IND}}(\analyst, \classifier, \targetdist)}
\newcommand{\dpname}{}
\newcommand{\expinddp}{\mathbf{Exp^{IND\dpname}}(\analyst, \mechanism, \targetdist)}

\newcommand{\expsemonly}{\mathbf{ Exp^{SEM}}}
\newcommand{\expindonly}{\mathbf{ Exp^{IND}}}
\newcommand{\expinddponly}{\mathbf{ Exp^{IND\dpname}}}

\newcommand{\gain}{\texttt{gain}}

\newcommand{\gainsemonly}{\gain^{\textbf{SEM}}}
\newcommand{\gainsem}{\gainsemonly(\analyst, \mechanism, \targetdist, \tau)}

\newcommand{\gainindonly}{\gain^{\textbf{IND}}}
\newcommand{\gainind}{\gainindonly(\analyst, \classifier, \targetdist)}

\section{Defense via Mutual Information Regularization}
\label{sec:formulation}

\subsection{Problem Setup}

In MI attacks, an adversary, given the access to a target model $f$ trained to predict specific labels $Y$, uses it to infer the information about the training data distribution $X$. We denote the output of the target model as $\hat{Y}$, i.e., $\hat{Y} = f(X)$. In addition, the adversary may also have access to some auxiliary knowledge $T$ that facilitates the inference. Consider the MI attacks against a face recognition model that labels an image containing a face with an identity corresponding to the individual depicted in the image. The  attack goal is to recover a representative image for some target identity $y$ based on the access to the target model. Possible auxiliary knowledge could be corrupted or blurred face images \cite{yang2019adversarial, zhang2019secret}. 

For the face recognition example, both the recovery of a training image for the target identity in and that of a test image (i.e., the image that does not appear in the training set but drawn from the same distribution) would incur privacy loss to the individual. Hence, it is important to design defenses to ensure the privacy of the entire training distribution, rather than just the members in training set. The goal of our defense is to design an algorithm to train the target model $f$ on the data distribution $(X,Y)$ such that the access to the resulting model does not allow an adversary to infer the information about $P(X|\hat{Y}=y)$.

A naive defense is to produce a classifier where $\hat{Y}$ is independent of $X$. In this case, the adversary cannot learn anything about the input data distribution for a given label. However, this classifier would clearly be useless in practice. Hence, there is a tradeoff between privacy and model utility and we want a defense that can achieve the best tradeoff between the two.

\subsection{Algorithm}
\label{sec:ourapproach}

Intuitively, we will need to limit the dependency between $X$ and $\hat{Y}$ to prevent the adversary from inferring the training data distribution associated with a specific label. Our idea is to quantify the dependence between $X$ and $\hat{Y}$ using their mutual information $\mathcal{I}(X;\hat{Y})$ and incorporate it into the training objective as a regularizer. Specifically, our defense, which we call MID, trains the target model via the following loss function:
\begin{equation}
\label{eq:regularizer}
\min_{\classifier \in \mathcal{H}}
% \mathcal{L}(\classifier) = 
E_{(x, y) \sim \targetdist_{X,Y}(x, y)}[\mathcal{L}(y, f(x))]
+ \lambda \mathcal{I}(X, \hat Y)
\end{equation}
where $\mathcal{I}(X, \hat Y)=\int_\mathcal{X} \int_\mathcal{Y} p_{X, Y}(x, y) 
    \log (\frac{p_{X, Y}(x, y)}{p_X(x)p_Y(y)}) dydx$,  $\mathcal{L}(y, f(x))$ is the loss function for the main prediction task, and
$\lambda$ is the weight coefficient that controls the tradeoff between privacy and utility on the main prediction task.

To deconstruct the proposed regularizer, we re-write the mutual information as follows:
\begin{equation}
\mathcal{I}(X, \hat Y) = \mathcal{H}(\hat Y) - \mathcal{H}(\hat Y | X) %= \mathcal{H}(\hat Y)
\end{equation}
When $f$ is a deterministic model, $\mathcal{H}(\hat Y | X)=0$ and introducing the mutual information regularizer effectively reduces the entropy of the model output, i.e., $\mathcal{H}(\hat Y)$. When $f$ is stochastic, the regularizer will additionally promote the uncertainty of the model output for a fixed input, i.e., $\mathcal{H}(\hat Y | X)$. In practice, reducing $\mathcal{H}(\hat Y)$ encourages the model output to be more concentrated and different inputs to be mapped into the same or very similar outputs; increasing $\mathcal{H}(\hat Y | X)$ makes the output to have a larger variance for a given $x$. Both terms will make $X$ less likely to be determined from the $\hat{Y}$. Figure \ref{fig:confidence_dist} illustrates this ``two-pronged'' privacy protection implied by our mutual information regularizer for a stochastic neural network consisting of a mean and a variance network trained on FaceScrub dataset. 
%We train the model on FaceScrub dataset and plot its confidence value for all test data with a particular label. 
Figure~\ref{fig:confidence_dist} (a) plots the confidence value on the label class versus the sum of confidence on other classes for all test data with a particular label. It demonstrates that the regularizer will lead to more concentrated model outputs. Figure~\ref{fig:confidence_dist} (b) shows the simulated distribution of confidence values for a particular image through models trained with MID. We can see an increased trend of prediction variance for this input with increased $\lambda$, i.e. the strength of the MID regularizer.

\newcommand{\scaleconf}{6cm}
\begin{figure}
  \centering
  \includegraphics[width=\columnwidth]{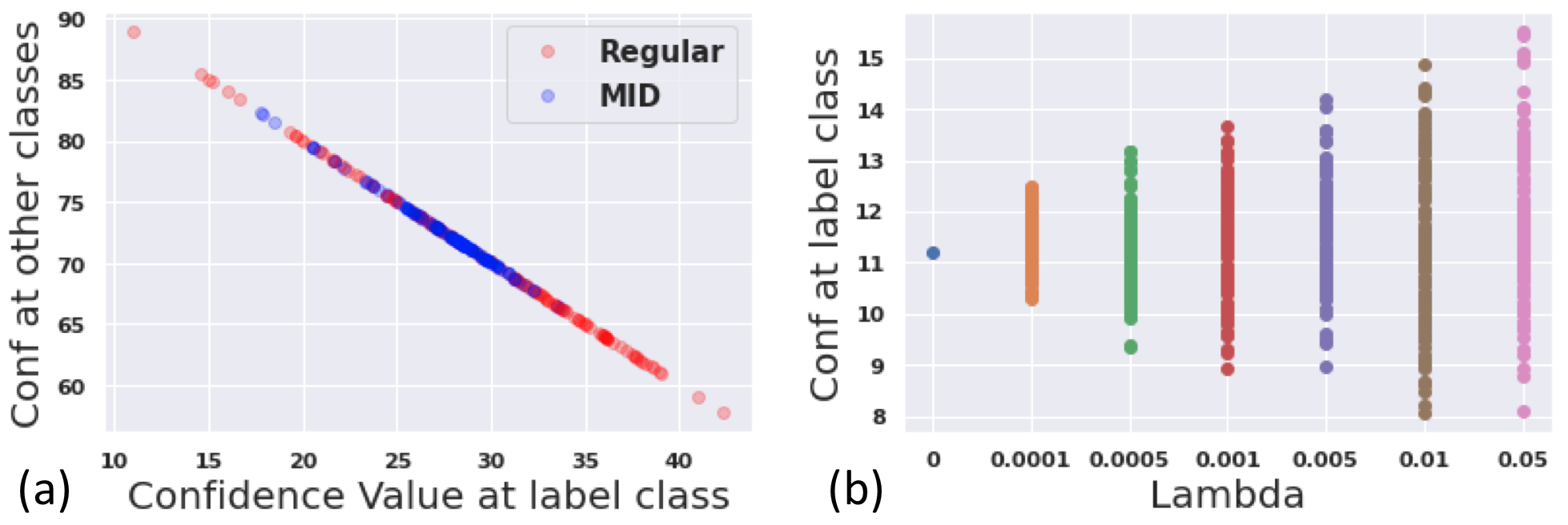}
  %\subfloat[]{
  %      \includegraphics[width=0.45\columnwidth]{}
  %}
  %\subfloat[]{
  %      \includegraphics[width=0.45\columnwidth]{}
  %}
  \caption{Illustration of the effect of penalizing $\mathcal{I}(X;\hat{Y})$ on the model output.}
  \label{fig:confidence_dist}
\end{figure}

\subsection{Instantiations of MID}
\label{sec:instantiation}
MID provides an appealing defense principle, as it defines what we mean by an effective defense, in terms of the fundamental tradeoff between reducing the input-output dependency and retaining good predictive power. However, computing mutual information is, in general, computational expensive. It requires modeling the joint distribution of model input and output and taking an integral over both domains, which is impracticable for most of the real-world prediction tasks. Here, we present efficient methods to implement the mutual information regularizer for different ML models that have been successfully attacked by previous work, including linear regression, decision trees, and neural networks. The unified idea underlying these methods is to find a tractable approximation to the mutual information regularizer.

% Computing the mutual information is, in general, computationally challenge. 
% Besides, mutual information is upper bounded and its gradient with respect to logits is approximately zero over a large domain. Hence it is usually impractical to train with mutual information regularizer directly. 
% In this section, we introduce the method of implementing information constraint regularizer in different ML models including linear regression, decision tree and neural networks. 

\paragraph{Linear Regression.} 
A linear regression model can be written as $\hat y = Ax$. Due to the deterministic nature of the model, the mutual information regularizer is reduced to $\mathcal{H}(\hat Y)$. 
% The exact computation of $\mathcal{H}(\hat Y)$ is typically impossible, since the distribution of $X$ could be complex.
We proposed to approximate the distribution of $\hat Y$ by a Gaussian mixture: 
\begin{align}
    p(\hat y) = \frac{1}{N} \sum_{i=1}^N \mathcal{N}(\hat y | Ax_i; \sigma^2) 
\end{align}
where $\{x_i\}_{i=1}^N$ is the training set and $\sigma$ is a free parameter. We utilizes a Taylor-expansion based approximation for the entropy of Gaussian mixtures described in~\citet{huber2008entropy} and derive the following approximation to $\mathcal{I}(X, \hat Y)$:
\begin{align}
  &\mathcal{\tilde{I}}_\text{lin}(X, \hat Y) = \\
  &-\frac{1}{N} \sum_{i=1}^N \log \bigg(
\frac{1}{N} \sum_{i=1}^N \frac{1}{\sqrt{2\pi \sigma^2}} \exp{(-\frac{1}{2}(\frac{Ax_i - Ax_j}{\sigma})^2)}\bigg)\nonumber
\end{align}

\paragraph{Decision Trees.}
We modify the ID3~\citep{quinlan1986induction}, one of the most classic algorithms for training decision trees, to incorporate the mutual information regularizer. ID3 begins with the original training set as the root node. On each iteration of the algorithm, it iterates through every unused feature and calculates the splitting criterion (e.g. Information Gain or Gini Impurity), which measures the homogeneity of the labels within subsets. It then selects the feature achieving highest score based on the splitting criterion. The training set is then partitioned by the selected feature to produce subsets of the data. The algorithm continues to recurse on each subset and consider the features never selected before. In the inference phase, the decision tree is used to classify new test cases by traversing the tree using the features of the datum to arrive at a leaf node and label the datum with the most common class of the examples in the leaf node. Since decision trees trained with ID3 are deterministic, the mutual information regularizer again reduces to $\mathcal{H}(\hat{Y})$. To defend against the MI attacks, we add $-\mathcal{H}(\hat{Y})$ into the splitting criterion. Specifically, let $A$ denote a feature and $\mathcal{C}(A)$ denote some homogeneity measure. We select the feature $A$ that maximizes $\mathcal{C}(A) - \lambda \mathcal{H}(\hat{Y})$ to split on, where $\mathcal{H}(\hat{Y})$ is evaluated empirically with the training set. 

% Note that $\mathcal{H}(\hat{Y})$ is a function of $A$ as it measures the diversity of the decision tree's prediction after splitting on the feature $A$.  

% To make the decision tree more robust against MI attacks, we incorporate $\mathcal{H}(\hat{Y})=-\sum_{\hat{y}}p(\hat{y})\log p(\hat{y})$ into the training process. Specifically, for each split, we combine the commonly used splitting criterion with $-\mathcal{H}(\hat{Y})$, which is evaluated empirically with the training set. Let $A$ denote a feature, and $\mathcal{C}(A)$ denote the homogeneity measure. Then, we want to select the feature $A$ that maximizes 
% \begin{equation}
% \mathcal{C}(A) - \lambda \mathcal{H}(\hat{Y})
% % \label{eq:decisiontree}
% \end{equation}
% to split on. Note that $\mathcal{H}(\hat{Y})$ is a function of feature $A$ as it measures the diversity of the decision tree's outcome after splitting by feature $A$.  

\paragraph{Neural Networks.} To come up with a tractable approximation to $ \mathcal{I} (X;\hat{Y})$ for neural networks, we get inspiration from the line of work on information bottleneck~\citep{shwartz2017opening,alemi2016deep} and regard the neural network as a Markov chain $Y-X-Z-\hat{Y}$, where $X$ is the feature, $Y$ is the ground truth label, $Z$ is a stochastic encoding of the input $X$ at some intermediate layer and defined by $P(Z|X;\theta)$, and $\hat{Y}$ is the prediction. By the data processing inequality~\citep{cover1999elements}, we have $\mathcal{I}(X, \hat Y) \le \mathcal{I}(X, Z)$. Prior work has provided various efficient approximation to $\mathcal{I}(X, Z)$~\citep{alemi2016deep,kolchinsky2019nonlinear}. Thus, we replace $\mathcal{I}(X, \hat Y)$ with its upper bound $\mathcal{I}(X, Z)$ in the training objective and train the neural network with the following loss function:
\begin{align}
    \min_\theta - \mathcal{I}(Z;Y) + \lambda \mathcal{I}(Z,X) 
\end{align}
The first term encourages the learned encoding to be maximally informative about the label $Y$ and measures the prediction performance of the model. The second term reduces the dependency between $X$ and $\hat{Y}$ by minimizing its upper bound and improves the robustness against the MI attacks. Note that the training objective above boils down to the classic information bottleneck and we will employ the variational method~\citep{alemi2016deep} to approximate the mutual information terms in the training objective.

% Schiwartz-Ziv and Tishby \cite{shwartz2017opening} shows that information theory could be used to visualize the training phase of a deep neural network to compare the performance of different network architectures. Modeling each layer in the deep network as a random variable gives an alternative view of a deep network as a Markov chain. 
% For an ensemble $(X, Y, Z)$ where $X$ represents the input data, $Y$ represents the variable of response, and $Z$ is a hidden layer in a deep neural network, the objective of Information Bottleneck principle is 
% \begin{equation}
% \begin{split}
%     &maximize~\mathcal{I}(Z, Y) \\ 
%     &subject~to~\mathcal{I}(X, Z) \le \tau,~\text{for some }\tau>0
% \end{split}
% \end{equation}
% By data processing inequality, we have 
% $$
% \mathcal{I}(X, \hat Y) \le \mathcal{I}(X, Z)
% $$ 
% Therefore, we employ Variational Information Bottleneck (VIB) technique described in \cite{alemi2016deep} to implement mutual information constraint for neural networks. 
\section{Theoretical Analysis}
In this section, we will formalize the MI attacks and quantify its \emph{distributional} privacy loss. Then, we will try to provide a theoretical basis for the recent observation that DP -- a canonical privacy notion nowadays -- cannot provide protection against the MI attacks with reasonable model utility~\citep{zhang2019secret,fredrikson2014privacy}.

\subsection{Formalizing MI Attacks}
\label{sec:formalizingmi}

We present a methodology for formalizing the MI attacks. Unlike previous works that capture the privacy loss of \emph{members in the training set} \cite{wu2016methodology, yeom2018privacy}, this is the first attempt of modeling the privacy loss of \emph{members in the population}. 

We have been mainly focused on the type of attacks that aim to recover all dimensions of the feature vector corresponding to a given label. Prior work on the MI attacks has also considered other types of attacks that only aim to reconstruct partial dimensions of the feature vector and have access to the remaining dimensions as auxiliary knowledge. For instance, \citet{fredrikson2014privacy} studied the MI attacks against a linear regression model that predict the medicine dosage based on the input of genotypes and some nonsensitive data like demographic information. To subsume this attack setting under our general formalization, we let the input random variable consists of the sensitive and nonsensitive part, i.e., $X=(X_s,X_{ns})$, and the attacker has access to $X_{ns}$. 

% Additionally, we let $\analyst$ denote the adversary, which is a randomized mechanism that maps the 
The attacker could also be interested in some specific property of the feature vector instead of the entire dimensions. In the face recognition example, the attacker could be interested in learning about the properties of a face image, such as hair color and race, instead of reconstructing the entire face image~\cite{zhang2019secret}. To capture this common attack scenario, we introduce $\tau$ to denote the property function that maps the sensitive feature to the property of the interest to the attacker.

We abstract the MI attacks into the following semantic MI game played between the server and the adversary. The game is described by a tuple $(\analyst, \mechanism, \targetdist, \tau)$, where $\analyst$ denotes the adversary which is a probabilistic machine with access to the learning algorithm $\mechanism$ and $p$ is the target data distribution. $\mechanism: \mathcal{X}^n \rightarrow \mathcal{H}$ is a learning algorithm that takes a dataset as input and outputs a single, fixed classifier $f$. $\mathcal{H}$ represents the hypothesis class.
\begin{experiment}
[Semantic MI $\expsemonly(\analyst, \mechanism, \targetdist, \tau)$]
\begin{enumerate}
    \item[]
    \item Server draws a training set $S \sim p^n$, and trains a classifer $f \leftarrow \mechanism(S)$. 
    \item Server draws $z=(x_{ns},x_s, y) \sim \targetdist$. $(f, x_{ns}, y)$ is presented to the adversary $\analyst$. 
    \item The adversary outputs $\analyst_{x_{ns}, y}$. $\expsem$ is $1$ if $\analyst_{x, y} = \tau(x_s)$, and $0$ otherwise. % \todo{a bit odd that the output of an MI attack is discrete. A better way to success criterion may be $d(A_{x,y},t) \leq \psi$; Tianhao: $\tau(\cdot)$ represents a property of $t$ which the attacker wants to learn. In the example of Face Recognition, the attacker may not interested in recovering the entire face of a guy but knowing he's white or African-American is good enough}
\end{enumerate}
\end{experiment}

The \emph{gain} of the game is evaluated as 
\begin{align}
\gainsem &= Pr[\expsem=1] \nonumber \\
&= Pr[
\analyst_{x_{ns}, y} = \tau(x_s)
]
\end{align}
where the probability is taken over the randomness of $S \sim p^n$, the randomness of $\mechanism$, the randomness of $\analyst$ and the randomness of $(x_s,x_{ns}, y) \sim \targetdist$. 
This game directly formalizes the procedure of MI attack. 
Trivially, for this game, the best strategy for the adversary is always output the most probable $\tau(x_s)$ when $x_s \sim p_{X_s|x_{ns}, y}$. 
Hence, the best possible gain for this game is 
\begin{equation}
E_{(x, y) \sim \targetdist}[ \max_v Pr_{x_s \sim p_{X_s|x_{ns}, y} }[\tau(x_s) = v] ]
% \begin{split}
% \max_{\analyst} & \gainsem \\
%  &= E_{(x, y) \sim \targetdist}[ \max_v Pr_{t \sim p_{T|x, y} }[\tau(t) = v] ]
% \end{split}
\end{equation}

%One way to characterize the minimal gain of the adversary is to define an alternative underlying distribution $p_{X_s} \times p_{X_{ns}, Y}$ such that $X_s$ and $(X_{ns}, Y)$ are independent of each other. 
% That is, sampling $t \sim \targetdist_T$ independently from $X$ and $Y$. 
%We then consider another classifier $f^*$ that is trained on a training set $S$ draws from $p_{X_s} \times p_{X_{ns}, Y}$
%. When $X_s$ has little correlation with $X_{ns}$ and $Y$, the gain of the attacker should be small no matter which strategy is used and which classifier $f$ is being attacked. The baseline gain of an MI adversary is thus
%\begin{equation}
%\label{eq:baseline2}
%\begin{split}
%\gainsemonly(\analyst, \classifier, p_{X_s} \times p_{X_{ns}, Y}, \tau)
%\end{split}
%\end{equation}

\newcommand{\advsem}{\texttt{Adv}^{\textbf{SEM}}}

To properly measure the adversary's advantage gained from the classifier $f$, 
we define an alternative underlying distribution $p_{X_s} \times p_{X_{ns}, Y}$ such that $X_s$ and $(X_{ns}, Y)$ have no correlation with each other. 
That is, sampling $x_s \sim p_{X_s}$ independently from $X_{ns}$ and $Y$. 
We %then define another classifier $f^*$ that is trained from a training set $S$ draws from $\tgtind$, and 
define the advantage of $\analyst$ to be 
\begin{equation}
\begin{split}
     \advsem &(\analyst, \mechanism, \targetdist, \tau) 
     = \gainsem \\
     &- \gainsemonly(\analyst, \mechanism, \tgtind, \tau)
\end{split}
\end{equation}
i.e. $\gainsemonly(\analyst, \mechanism, \tgtind, \tau)$ is the trivial baseline gain of the adversary when there are indeed no correlation between $X_s$ and $(X_{ns}, Y)$, and $\advsem$ measures the extra gain the adversary can get. Note that there are multiple other ways to define the adversary's advantage which capture different insights. We defer this discussion to the Appendix. 

\subsection{Protection from Differential Privacy}
\label{sec:theory}

%In the following we illustrate that our general approach provides theoretical privacy guarantee for any attack scenarios covered by our formulation in the previous section if we make mild assumptions about the classifier $\classifier$ and objective $\tau$. That is, we will show that $\gainsemonly$ is not much larger than the trivial gain no matter what attack strategy $\analyst$ is, as long as the classifier $\classifier$ is trained within the proposed framework. We first present a modified game, show its relation with the semantic game, and prove such a classifier implies an upper bound of total variation distance between the output distributions of different input distributions. Then, we use this result to show that the gain of distinguisher is upper bounded, which implies that the gain of semantic MI is upper bounded for any attack strategy $\analyst$.

\newcommand{\invset}{f^{inv}} 
\newcommand{\major}{\mathrm{major}_\tau}
\newcommand{\majorspace}{\mathcal{T}_x^m}

\label{sec:theory-dp}
One dominate privacy notion is DP, which carefully randomizes an algorithm so that its output does not to depend too much on any single individual in the dataset \cite{dwork2014algorithmic}. 

% In the context of ML algorithms, DP guarantees protecting the privacy of sensitive training data \cite{abadi2016deep}. 
% In the following we illustrate that, the privacy guarantee for distribution from differential privacy is limited and hard to achieve acceptable utility-privacy tradeoff. 

\begin{definition}[Differential Privacy]
\label{def:dp}
Let $\mechanism: \mathcal{X}^n \rightarrow \mathcal{R}$ be a randomized mechanism. We say that $\mechanism$ is $(\epsilon, \delta)$-differentially private if for every two adjacent datasets $S \sim S'$ and every subset $R \subseteq \mathcal{R}$, 
\begin{align}
    Pr[\mechanism(S) \in R] \le e^\epsilon Pr[\mechanism(S') \in R] + \delta
\end{align}
\end{definition}

% The definition of DP can be generalized to datasets that differ on more than a single entry. We say that two datasets $S, S' \in \mathcal{X}^n$ are \emph{$k$-adjacent} if they differ by at most $k$ entries, and we denote this by $S \sim_k S'$. 
% \begin{theorem}[Group DP~\citep{}\todo{citation?}]
% \label{thm:groupdp}
% For every $k \ge 1$, if $\mechanism: \mathcal{X}^n \rightarrow \mathcal{R}$ is $(\epsilon, \delta)$-differentially private, then for every two $k$-adjacent datasets $S \sim_k S'$, and every subset $R \subseteq \mathcal{R}$, we have
% $$
% Pr[\mechanism(S) \in R] \le e^{k\epsilon} Pr[\mechanism(S') \in R] + 
% \frac{e^{k\epsilon}-1}{e^\epsilon-1} \delta
% $$
% \end{theorem}
% Another central property of differential privacy is that it is preserved under post-processing. 
% \begin{theorem}[differential privacy under post-processing]
% \label{thm:postprocessing}
% If $\mechanism: \mathcal{X}^n \rightarrow \mathcal{R}$ is $(\epsilon, \delta)$-differentially private, and $\mathcal{F}: \mathcal{R} \rightarrow \mathcal{Z}$ is any randomized function, then $\mathcal{F} \circ \mechanism: \mathcal{X}^n \rightarrow \mathcal{Z}$ is $(\epsilon, \delta)$-deferentially private. 
% \end{theorem}

We introduce an indistinguishability game to derive the robustness guarantee offered by DP. The game is described by a tuple $(\analyst,\mechanism,p)$. %$\mechanism: \mathcal{X}^n \rightarrow \mathcal{H}$ is a learning algorithm that takes a dataset as input and outputs a single, fixed classifier $f$. $\mathcal{H}$ represents the hypothesis class. Under the indistinguishability game, the adversary $\analyst$ is not only allowed to exploit the output distribution of $\classifier$, but also the training process $\mechanism$.

% We present a new indistinguishability game to derive the privacy guarantee from differential privacy for $\expsemonly$. 
% The world receives the learning algorithm $\mechanism: \mathcal{X}^n \rightarrow \mathcal{H}$ as the input instead of a single, fixed classifier $\classifier$, and the adversary has oracle access to the learning algorithm $\mechanism$. $\mathcal{H}$ represents the hypothesis class. 
% Under this setting, the adversary $\analyst$ is not only allowed to exploit the output distribution of $\classifier$, but also the training process $\mechanism$.  

\begin{experiment}[Indistinguishability$\dpname$~$\expinddp$]
\begin{enumerate}
    \item[]
    \item Server chooses $b \leftarrow \{0, 1\}$ uniformly at random. 
    \item Server draws training set $S \sim (p)^n$ if $b=0$; $S \sim (p_{X_s}\times p_{X_{ns},Y})^n$ if $b=1$. 
    \item Server trains a classifier $\classifier \leftarrow \mechanism(S)$ and presents $f$ to the adversary. 
    \item $\analyst$ outputs $b' \in \{0, 1\}$. $\expinddp$ is $1$ if $b=b'$, and 0 otherwise. 
\end{enumerate}
\end{experiment}

Let $\gainindonly \!(\analyst, \mechanism, \targetdist)\! = Pr[\expinddp=1]$. The following theorem shows that the gain of the semantic MI game can be upper bounded in terms of the best possible gain of this indistinguishability game. 
\begin{theorem}
For any attack strategy $\analyst^*$, learning algorithm $\mechanism$, target distribution $\targetdist$ and property function $\tau$,
\begin{multline}
\advsem\!(\analyst^*, \mechanism, \targetdist, \tau)\! \le 2\max_\analyst \gainindonly \!(\analyst, \mechanism, \targetdist)\! -\! 1 
\end{multline}

%\begin{align}
%   &\gainsemonly (\analyst^*, \mechanism, \targetdist, \tau)\\
%    &\leq 2\max_\analyst \gainindonly (\analyst, \mechanism, \targetdist) \nonumber + \gainsemonly (\analyst^*, \mechanism, \targetdist, \tau) - 1\nonumber
%\end{align}
% the expected gain $E_{\mechanism, S\sim (p)^n}[ \gainsemonly (\analyst^*, \mechanism(S), \targetdist, \tau)]$ is upper bounded by $2\max_\analyst \gainindonly (\analyst, \mechanism, \targetdist) + E_{\mechanism, S \sim (p_T \times p_{X, T})^n}[ \gainsemonly (\analyst^*, \mechanism(S), \targetdist, \tau)] - 1$. 
\label{thm:inddptosem}
\end{theorem}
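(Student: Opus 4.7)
The plan is to prove the bound by a reduction: from the semantic MI adversary $\analyst^*$, construct a specific distinguisher $\analyst$ for the indistinguishability game whose gain satisfies $2\gainindonly(\analyst,\mechanism,\targetdist)-1 \ge \advsem(\analyst^*,\mechanism,\targetdist,\tau)$. Since the RHS of the theorem takes the maximum over all distinguishers, this one explicit construction suffices.

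I would define $\analyst$ as follows. Upon receiving the trained classifier $f$, the distinguisher samples a fresh test point $(x_s, x_{ns}, y) \sim \targetdist$, invokes $v \gets \analyst^*(f, x_{ns}, y)$, and outputs $b' = 0$ if $v = \tau(x_s)$ and $b' = 1$ otherwise. The next step is to analyze the two branches of the indistinguishability experiment separately. Under $b = 0$, the joint distribution of $(f, x_s, x_{ns}, y)$ simulated inside $\analyst$ coincides exactly with the one produced by the semantic MI experiment under $\targetdist$, which gives $\Pr[b' = 0 \mid b = 0] = \gainsemonly(\analyst^*, \mechanism, \targetdist, \tau)$. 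Under $b = 1$, $f$ is drawn from $\mechanism((p_{X_s}\times p_{X_{ns},Y})^n)$ while the test point is still drawn from $\targetdist$; letting $g_0(f)$ and $g_1(f)$ denote the success probability of $\analyst^*$ when $f$ is fixed and the test point is drawn from $\targetdist$ and from $\tgtind$ respectively, this branch contributes $\Pr[b' = 0 \mid b = 1] = \mathbb{E}_{f \sim \mechanism(\tgtind^n)}[g_0(f)]$.

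Once these two conditional probabilities are in hand, I would combine them via $2\gainindonly(\analyst, \mechanism, \targetdist) - 1 = \Pr[b'=0 \mid b=0] + \Pr[b'=1 \mid b=1] - 1$ and rearrange to isolate $\advsem$. This yields the desired bound provided that $\mathbb{E}_{f \sim \mechanism(\tgtind^n)}[g_0(f)] \le \gainsemonly(\analyst^*, \mechanism, \tgtind, \tau) = \mathbb{E}_{f \sim \mechanism(\tgtind^n)}[g_1(f)]$, i.e., the ``cross'' evaluation in which $f$ was trained on independent data but the test point carries the real correlation is dominated by the fully independent baseline.

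The main obstacle is precisely this cross-term inequality in the $b=1$ branch. The reduction naturally produces a bound in terms of $g_0 - \mathbb{E}_{\mu_1}[g_0(f)]$, not $g_0 - g_1$; bridging the gap requires arguing that when the training set is sampled from $\tgtind$, the classifier $f$ carries no information about the $X_s$–$(X_{ns},Y)$ dependence present in $\targetdist$, so $\analyst^*(f, x_{ns}, y)$ is, conditionally on $f$ and $(x_{ns}, y)$, no more predictive of a $\targetdist$-correlated $x_s$ than of an independent $x_s \sim p_{X_s}$. I would formalize this via a coupling of the two test distributions together with a data-processing / Markov-chain argument that exploits the independence of $f$ from the correlation structure. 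With this step in place, the four-line combination in the previous paragraph finishes the proof.
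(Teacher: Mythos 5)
Your reduction is exactly the one the paper uses: the distinguisher runs $\analyst^*$ on a fresh test point drawn from $\targetdist$ in both branches, and the two conditional success probabilities you compute are precisely $\gainsemonly(\analyst^*,\mechanism,\targetdist,\tau)$ and (one minus) your quantity $\E_{f\sim\mechanism((\tgtind)^n)}[g_0(f)]$. The point you are missing is that the paper does \emph{not} ask you to bridge from there to $\gainsemonly(\analyst^*,\mechanism,\tgtind,\tau)$. In the appendix the theorem is restated with the advantage defined relative to the ``ideal world'' experiment, in which the training set is drawn from $(\tgtind)^n$ but the test point is still drawn from $\targetdist$ --- that is, the baseline is exactly your $\E_{f\sim\mechanism((\tgtind)^n)}[g_0(f)]$. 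With that definition your four-line combination already closes the argument as an identity, $\gainindonly(\analyst,\mechanism,\targetdist)=\tfrac12+\tfrac12\advsem(\analyst^*,\mechanism,\targetdist,\tau)$, and taking the maximum over distinguishers gives the theorem. So, up to the choice of baseline, your proof is the paper's proof.

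The ``main obstacle'' you flag is therefore not an obstacle to the theorem as actually proven, but it is worth noting that the cross-term inequality you propose to establish, $\E_{f\sim\mechanism((\tgtind)^n)}[g_0(f)]\le\E_{f\sim\mechanism((\tgtind)^n)}[g_1(f)]$, is false in general, so no coupling or data-processing argument will deliver it. The adversary's prior knowledge of $\targetdist$ lets it exploit the $X_s$--$(X_{ns},Y)$ correlation even when $f$ carries no information about it: take $\tau$ to be the identity, let $X_s=Y\in\{0,1\}$ deterministically under $\targetdist$ with $p_{X_s}$ uniform, and let $\analyst^*$ ignore $f$ and output $y$. Then $g_0(f)=1$ while $g_1(f)=\tfrac12$ for every $f$. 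The same example with a constant learning algorithm (so that $\max_\analyst\gainindonly=\tfrac12$ and the right-hand side of the theorem is $0$) shows that the statement would actually be \emph{false} if the advantage were measured against the fully independent baseline $\gainsemonly(\analyst^*,\mechanism,\tgtind,\tau)$ from the main text; the ideal-world baseline is what makes it true. Drop the bridging step and adopt that baseline, and your argument is complete.
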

%Note that the second term in the upper bound is the expected baseline gain discussed earlier, which should have a small value because of the independence between $X_s$ and $(X_{ns},Y)$. 
Hence, to mitigate the threats of MI attack, we want $\max_\analyst \gainindonly$ to be not much greater than $1/2$.

We will provide a tight bound of $\gainindonly$ for any attack strategy $\analyst$ when the learning algorithm is differentially private. 
In the analysis, we assume that with high probability a training set drawn from $p^n$ has no intersection with another set drawn from $( \tgtind )^n$, i.e.,
\begin{equation}
Pr_{S \sim p^n, S' \sim ( \tgtind )^n}[S \sim_n S'] = 1 - \gamma
\end{equation}
where $S \sim_n S'$ indicates that the two datasets $S, S' \in \mathcal{X}^n$ differ by  $n$ entries, and $\gamma$ is a small value. 
This assumption is plausible for many practical scenarios where the feature vector has continuous domain or is high-dimensional. 

%For two datasets $S \sim p^n$ and $S' \sim (u_T \times p_{X, Y})^n$, a simple analysis can obtain that $Pr_{S \sim p^n, S' \sim (u_T \times p_{X, Y})^n}[S \sim_n S'] = 1-O(\frac{1}{n})$. 
% For most of the cases in practice, we have $S \sim_n S'$ (e.g. continuous case or attributes has many different classes), hence for simplicity, in the following analysis we assume 
% \begin{equation}
% Pr_{S \sim p^n, S' \sim ( \tgtind )^n}[S \sim_n S'] = 1
% \end{equation}

\begin{theorem}
\label{thm:dpguarantee}
If the learning algorithm $\mechanism: \mathcal{X}^n \rightarrow \mathcal{R}$ is $(\epsilon, \delta)$-differentially private, then with probability at least $1-\gamma$ we have tight bound $\max_\analyst gain^{IND\dpname}(\analyst, \mechanism, \targetdist) \le \frac{e^{n\epsilon}}{e^{n\epsilon}+1} + \frac{e^{n\epsilon}-1}{(e^{n\epsilon}+1)(e^\epsilon-1)} \delta$. 
\end{theorem}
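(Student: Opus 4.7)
My plan is to reduce the indistinguishability game to a total-variation (TV) comparison between the two marginal distributions of $f$ (depending on whether $S$ was drawn from $p^n$ or from the product $p_{X_s}\times p_{X_{ns},Y}$), then use group privacy to bound that TV distance via the DP guarantee of $\mechanism$.

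First I would observe that for any fixed pair of training sets $S, S'$ differing in all $n$ coordinates, the $(\epsilon,\delta)$-DP property of $\mechanism$ lifts, via the standard group-privacy composition lemma, to a $(n\epsilon, \delta_n)$-DP guarantee with $\delta_n = \frac{e^{n\epsilon}-1}{e^\epsilon-1}\delta$. By the assumption on the data distribution, with probability at least $1-\gamma$ the two training sets sampled in the two branches of the indistinguishability game satisfy exactly this ``differ in all $n$ entries'' condition, so we may condition on this event throughout. Averaging the per-$(S,S')$ DP inequality against the couplings of $S\sim p^n$ and $S'\sim (\tgtind)^n$ then gives, for every measurable set $R\subseteq\mathcal{H}$,
\begin{equation}
P_0(R) \le e^{n\epsilon} P_1(R) + \delta_n,
\qquad P_1(R) \le e^{n\epsilon} P_0(R) + \delta_n,
\end{equation}
where $P_b$ denotes the marginal law of $f$ under branch $b$ of the game.

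Next I would translate the maximum gain of the adversary into a statistical-distance problem. Since the adversary's best strategy in a symmetric hypothesis test is the Neyman--Pearson test, $\max_\analyst \gainindonly(\analyst,\mechanism,\targetdist) = \tfrac{1}{2} + \tfrac{1}{2}\,\mathrm{TV}(P_0,P_1)$. Thus it suffices to show that any pair $(P_0,P_1)$ satisfying the two-sided DP inequalities above obeys
\begin{equation}
\mathrm{TV}(P_0,P_1) \le \frac{e^{n\epsilon}-1+2\delta_n}{e^{n\epsilon}+1},
\end{equation}
and that this bound is achieved by some pair. Plugging into the gain formula yields $\frac{e^{n\epsilon}+\delta_n}{e^{n\epsilon}+1} = \frac{e^{n\epsilon}}{e^{n\epsilon}+1}+\frac{e^{n\epsilon}-1}{(e^{n\epsilon}+1)(e^\epsilon-1)}\delta$, which is exactly the claim.

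For the TV bound I would set $p=P_0(R)$, $q=P_1(R)$ and solve a small linear program: maximize $p-q$ subject to the four DP constraints and $0\le p,q\le 1$. The constraints force, at any extremal $R$, a relation of the form $p=e^{n\epsilon}q+\delta_n$, and the upper envelope $p+q\le 1-\delta_n$ (with the symmetric inequality on the complement of $R$) pins the optimum at $q=(1-\delta_n)/(e^{n\epsilon}+1)$, giving the claimed TV value. For tightness I would exhibit a three-atom pair of distributions $\mu_0=(\alpha,\beta,1-\alpha-\beta)$, $\mu_1=(\beta,\alpha,1-\alpha-\beta)$ with $\alpha=e^{n\epsilon}\beta+\delta_n$ and $\alpha+\beta=1-\delta_n$, verify directly that this pair satisfies the $(n\epsilon,\delta_n)$-DP inequalities on every event, and then realize it as the induced law of $\mechanism$ applied to two datasets differing in $n$ entries.

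The main obstacle I anticipate is the TV-distance optimization: getting the exact constant (rather than the loose $1-\tfrac{1}{2}e^{-n\epsilon}(1-\delta_n)$ one obtains from the naive ``set $p=1$'' argument) requires the two-sided complement inequality on $R^c$ to be active simultaneously, and the tight witness must be a three-point construction rather than a two-point one. Once that optimization is nailed down and the high-probability conditioning on $S\sim_n S'$ is handled cleanly, the rest of the argument is a substitution.
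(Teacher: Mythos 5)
Your proof is correct and rests on the same two pillars as the paper's: post-processing plus group privacy to turn $(\epsilon,\delta)$-DP into an $(n\epsilon,\,\delta_n)$-indistinguishability statement between the laws of $f$ under the two branches (conditioning on the event $S\sim_n S'$, which holds with probability $1-\gamma$), followed by an analysis of the balanced hypothesis-testing game. Where you diverge is in the finishing step: you invoke the Neyman--Pearson characterization $\max_\analyst \gainindonly(\analyst,\mechanism,\targetdist)=\tfrac{1}{2}+\tfrac{1}{2}\mathrm{TV}(P_0,P_1)$ and then solve a small linear program over $(P_0(R),P_1(R))$ using both the constraint on $R$ and on its complement, whereas the paper avoids the LP entirely by applying the two group-privacy inequalities to the two halves of $Pr[\expinddp=1]$ and exploiting $Pr[\expinddp=0]=1-Pr[\expinddp=1]$ to get the one-line recursion $g\le e^{n\epsilon}(1-g)+\delta_n$, which solves to the stated bound. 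The two computations are equivalent (your observation that the complement inequality must be active is exactly what the paper's symmetrization does implicitly), so your route is sound but slightly heavier; what it buys you is an explicit extremal pair, and indeed you go further than the paper by attempting the tightness half of the claim, which the paper asserts but never proves. Two small corrections on that extra part: your three-atom witness with $\alpha+\beta=1-\delta_n$ does not reproduce the extremal value (the LP optimum has $p+q=1$, not $1-\delta_n$), and a two-point symmetric pair $\mu_0=(p,1-p)$, $\mu_1=(1-p,p)$ with $p=\frac{e^{n\epsilon}+\delta_n}{e^{n\epsilon}+1}$ already saturates all four constraints, so the three-point construction is not needed.
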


As we can see, to make the upper bound $\frac{e^{n\epsilon}}{e^{n\epsilon}+1}
+ \frac{e^{n\epsilon}-1}{(e^{n\epsilon}+1)(e^\epsilon-1)} \delta = \frac{1}{2} 
+ \frac{e^{n\epsilon}-1}{2(e^{n\epsilon}+1)} 
+ \frac{e^{n\epsilon}-1}{(e^{n\epsilon}+1)(e^\epsilon-1)} \delta \in \frac{1}{2} + o(1)
$, 
the privacy budget $\epsilon$ needs to be set as $o(\frac{1}{n})$. However, this privacy budget is too small to allow any useful computation (see, e.g., Section 2.3.3 of \citet{dwork2014algorithmic}, for a comprehensive review). Since the bound in Theorem \ref{thm:dpguarantee} is tight, with high probability DP cannot mitigate the MI attacks with any reasonable model utility.

% Thus, we conclude that differential privacy achieves poor utility-privacy tradeoff in confronting MI attack. 

\begin{table*}[]
\centering
\resizebox{\linewidth}{!}{%
\begin{tabular}{c|c|c|c|c|c|c}
\toprule
&\textbf{Attack}     & \textbf{Model}    & \textbf{Dataset} & \textbf{Defense Baseline} & \textbf{Utility Metrics} & \textbf{Attack Metrics} \\ \midrule
\multirow{4}{*}{Blackbox} &Naive MAP           & Linear Regression & IPWC             & DP-AdaSSP                 & MSE                      & Acc; AUROC              \\ 
&Naive MAP           & Decision Tree     & FiveThirtyEight  & DPID3; Priority           & F1                       & F1                      \\ 
&Knowledge Alignment & Neural Networks   & FaceScrub        & DPSGD                     & Accuracy                 & Acc, L2, ECE            \\ 
&Update-Leaks        & Neural Networks   & CIFAR10          & DPSGD                     & Accuracy                 & Acc, ECE                     \\ \midrule
\multirow{2}{*}{Whitebox} &MAP with counts     & Decision Tree     & FiveThirtyEight  & DPID3; Priority           & F1                       & F1                      \\ 
&GMI                 & Neural Networks   & CelebA           & DPSGD                     & Accuracy                 & Acc, L2                 \\ \bottomrule
\end{tabular}
}
\caption{Summary of experimental settings.}
\label{tb:settingsummary}
\end{table*}

\section{Experiments}
In this section, we present the empirical evaluation of the efficacy of MID against different attacks and compare it with existing defense mechanisms. 

\subsection{Experiment Setting}

Table~\ref{tb:settingsummary} summarizes our experimental setting, including the attacks, models, datasets, metrics used in our evaluation as well as the other defenses that we compare with. We leave the description of datasets, attack hyper-parameters, model architectures as well as the details of the evaluation process and metrics to the Appendix. 

\paragraph{Attack algorithms.} We compare the efficacy of MID against the following MI attacks, which are the most effective ones presented in the literature thus far.

\begin{itemize}
    \item \underline{MAP}~\citep{fredrikson2014privacy,fredrikson2015model} casts the MI attack as an optimization problem that seeks for the maximum a posteriori probability (MAP) estimate of the sensitive attribute under the target model. 
    % The way to estimate the posterior is model-specific. For linear regression model, the likelihood function of the sensitive attribute value is a Gaussian over the prediction error~\citep{fredrikson2014privacy}. For decision tree, the likelihood function is modeled via a confusion matrix (referred to as ``black-box'' in \citep{fredrikson2015model}). 
    For decision trees, if the attacker also has access to the number of training examples for each path in the tree, it can further improve the attack performance by improving the MAP estimate %of the joint distribution of sensitive attributes 
    (referred as ``white-box with counts'' in \citep{fredrikson2015model}). We dub these two attacks \emph{Naive MAP} and \emph{MAP with counts}, respectively. 
    
    \item \underline{Knowledge Alignment}~\citep{yang2019adversarial} is a blackbox MI attack, in which the adversary trains an inversion model that swaps the input and output of the target model using the dataset drawn from a distribution similar to the private data distribution. The inversion model is then used to reconstruct the input feature for any given target model output.
    
    \item \underline{Update Leaks}~\citep{salem2019updates} is a blackbox MI attack designed for online ML models. It adopts a similar idea to~\citep{yang2019adversarial} and trains an inversion model to reconstruct the private data. The difference from~\citep{yang2019adversarial} is that the input to the inversion model now is the change of the target model output before and after an online update; moreover, the inversion model leverages the generative adversarial networks to improve the reconstruction accuracy.
    
    \item \underline{Generative MI (GMI)}~\citep{zhang2019secret} is a whitebox MI attack achieving the state-of-the-art performance against deep neural networks. GMI solves the MAP to recover the most possible private attribute via gradient descent. The key idea of GMI is to leverage public data to learn a generic prior for the private training data distribution and use it to regularize the optimization problem underlying the attack.

\end{itemize}

\paragraph{Defense baselines.} We compare our defense with DP for all attack algorithms, as well as other existing defenses available for specific models or attacks. We implement the AdaSSP, differentially private ID3, and DPSGD in~\citep{wang2018revisiting,friedman2010data,abadi2016deep} to construct differentially private linear regression models, decision trees, and neural networks, respectively. As for more specialized defenses, we compare with a defense proposed in~\citep{fredrikson2014privacy} for decision trees, which adjusts the depth of the sensitive features. We will use ``Priority'' to symbolize this defense. Previous work has also proposed to defend against blackbox attacks against neural networks by rounding the output confidence scores~\citep{fredrikson2015model} or injecting uniform noise to the confidence vector~\citep{salem2019updates}. However, our experiments found that these two defenses are not effective to protect against the attacks considered in this paper regardless of the rounding precision or the noise magnitude, respectively. Hence, we do not exhibit the results for these two defenses.

\paragraph{Evaluation Protocol.} We evaluate the performance of a defense mechanism in terms of the privacy-utility tradeoff. All the defenses considered in our paper have some hyperparameters which we could tune to vary the robustness and model performance. For the MID and DP, we can vary the weight parameter $\lambda$ in Equation~\ref{eq:regularizer} and the privacy budget $\epsilon$, respectively. For Priority, we can vary the depth of the tree where we place the sensitive features. In our experiment, we vary these hyperparameters and generate a utility-privacy tradeoff curve for each defense. We then use these curves to compare the performance of different defenses. When the target model is linear regression and a decision tree, we generate 100 models with different training and test data split and average the utility and privacy results over different models for each defense strategy and hyperparameter setting. When the target model is a neural network, we train 3 models and average the results. To illustrate the distributional privacy leakage, we demonstrate the attack performance on both training and test set for all attacks except GMI and Update-Leaks, since these two attacks aim at constructing a representative images for a given label instead of reconstruct a particular image in the training or test set. To compare the privacy risk of training and test data for GMI and Update-Leaks, we calculate the $L_2$ distances from deep feature representation of reconstructed images to that of training images and test images.

% for each defense strategies and hyper-parameters.
% For the attacks on neural networks, we train 3 models for each defense strategies and hyper-parameters. The final utility vs privacy plots are generated by taking the average of the statistics on different models. 

% We compare the attack performance on both training and test set for all attack algorithms except Update-Leaks and GMI since these two algorithms aim for constructing a representative images for certain label classes instead of reconstruct an individual image. For GMI, however, we compare the L2 distances between reconstructed images and training images as well as test images. 

\paragraph{Evaluation Metrics.} The evaluation metrics for the model utility and attack performance are listed in Table \ref{tb:settingsummary}. The utility metrics depend on the underlying prediction task and datasets. For regression tasks, we employ \emph{mean squared error} (MSE); for classification tasks, we generally use \emph{test accuracy} as the metric unless the dataset is highly imbalanced in which case we use the \emph{F1 score}. For blackbox attacks on neural networks, there is a trivial defense to just release the prediction rather than the entire confidence vector and this defense achieves good utility in terms of test accuracy. However, this defense omits useful confidence information. To capture to what extent the defense worsen the confidence output, we leverage the \emph{Expected Calibration Error} (ECE)~\citep{guo2017calibration}, a metric commonly used to measure the miscalibration between confidence and accuracy. 
As for the attack performance, we follow the papers where the attacks were originally proposed and mostly use their metrics. Specifically, attack performance metrics for discrete attribute include \emph{accuracy}, \emph{AUROC}, and \emph{F1-score}.
% (1) \emph{accuracy}, which is the percentage of samples for which the attack algorithm correctly predicted genotype, (2) \emph{AUROC}, which is the multi-class area under receiver operating characteristic (ROC) curve, and (3) \emph{F1-score}, which is the harmonic average of attack precision and recall. 
The latter two are used when the sensitive attributes are highly imbalanced. To evaluate the attack performance for images, we build an evaluation classifier to see whether the image can be recognized as the target class and compute the \emph{attack accuracy}. For all the metrics above, the high value they take, the better attack performance. We also calculate the $L_2$ distances from deep feature representation of reconstructed images to that of training images and test images in order to compare the privacy threats to training and test data. 

\subsection{Results}
\begin{figure*}[ht!]
  \centering
  \includegraphics[width=\textwidth]{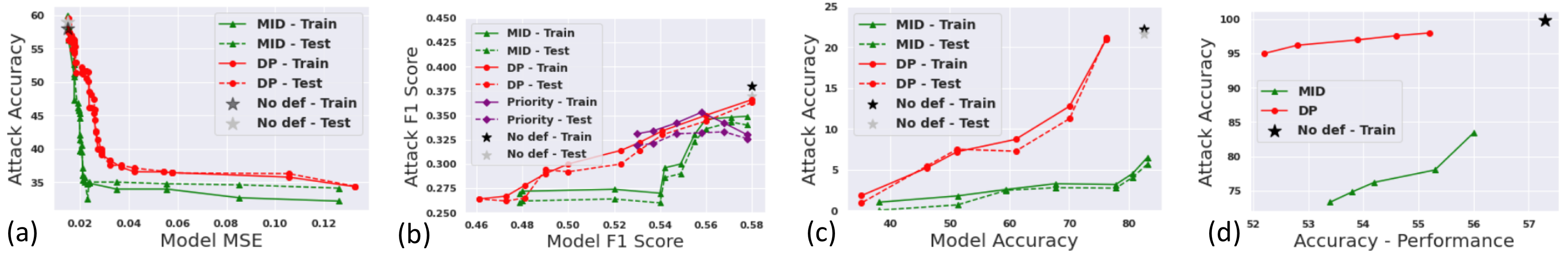}
  %\subfloat[]{
  %      \includegraphics[width=0.49\columnwidth]{}
  %      \label{fig:lrblackbox-acc}
  %}
  %\subfloat[]{
  %      \includegraphics[width=0.49\columnwidth]{} }
  %      \label{fig:dtblackbox-f1}
 %\subfloat[]{
  %      \includegraphics[width=0.49\columnwidth]{}
  %      \label{fig:nnblackbox-acc}
  %}
  %  \subfloat[]{
  %      \includegraphics[width=0.49\columnwidth]{}
  %      \label{fig:update-1}
  %}
  \caption{Defense result for blackbox MI attacks, including (a-b) Naive MAP on linear regression and decision trees, (c) Knowledge Alignment and (d) Update-Leaks on neural networks. }
  \label{fig:blackbox}
\end{figure*}

Figure~\ref{fig:blackbox} compares the performance of our defense against the existing baselines on various blackbox attacks, including Naive MAP (Figure~\ref{fig:blackbox} (a-b)), Knowledge Alignment (Figure~\ref{fig:blackbox} (c)), and Update-Leaks (Figure~\ref{fig:blackbox} (d)). For Naive MAP, we demonstrate the results on both linear regression and decision trees as these two models were successfully attacked by Naive MAP in prior work~\citep{fredrikson2014privacy,fredrikson2015model}. Knowledge Alignment and Update-Leaks are designed for attacking neural networks, so we only exhibit the results on neural networks for these two attacks. Figure~\ref{fig:blackbox} shows that MID can consistently achieve a better utility-privacy tradeoff than DP for protecting against all the attacks. For decision trees, there exists a model-specific defense strategy, named Priority, which adjusts the depth of the sensitive feature. Figure~\ref{fig:blackbox} (b) shows that MID is more robust than Priority in most cases except when model has a very high performance (i.e., F1 score $>$ 0.57). Nevertheless, Priority has its own drawbacks: firstly, it is only applicable to decision trees; and second, it only protects a subset of attributes, or the sensitive attributes, which could be subject to additional privacy concern because one may leverage the correlation between nonsensitive attributes and sensitive attributes to recover the sensitive one. 
On the other hand, the principles of MID and DP are applicable to different types of models and protect all the attributes at once. 
A phenomenon consistently present in all attacks is that the more predictive power the model has, the more vulnerable it is to the attacks, regardless of the types of defenses attached to the model. This finding signifies the difficulty to defend against the MI attacks; yet, our defense can significantly improve the model robustness for any fixed model performance. We also evaluate the attack performance in terms of other metrics shown in Table~\ref{tb:settingsummary}. As they give the similar results to the metrics exhibited in Figure~\ref{fig:blackbox}, we will omit these results to the Appendix.

% \begin{wrapfigure}{r}{0.49\columnwidth}
% \vspace{-2em}
%   \begin{center}
%     \includegraphics[width=0.49\columnwidth]{}
%   \end{center}
%   \label{fig:nnblackbox-ece}
%   \vspace{-1em}
%   \caption{Confidence calibration of defenses evaluated on Knowledge Alignment.}
% \end{wrapfigure}

\begin{figure}[ht!]
  \centering
  \includegraphics[width=\columnwidth]{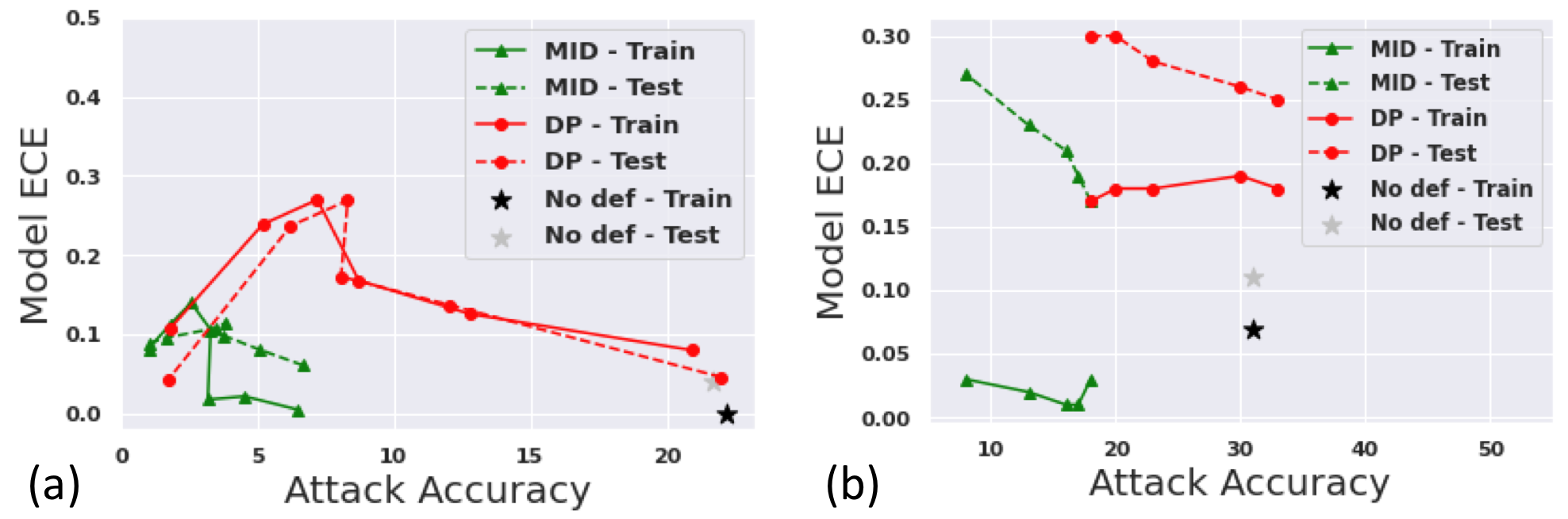}
  %\subfloat[]{
  %  \includegraphics[width=0.49\columnwidth]{images/nnblackbox/BB-ECE.png}
  %      \label{fig:knowledgeailgn-ece}
  %}
  %\subfloat[]{
  %      \includegraphics[width=0.49\columnwidth]{} }
  %      \label{fig:updateleaks-ece}
  \caption{Confidence calibration of defenses evaluated on the (a) Knowledge Alignment and the (b) Update-Leaks attack. }
  \label{fig:blackbox-ece}
\end{figure}

Figure~\ref{fig:blackbox-ece} illustrates the miscalibration between confidence and accuracy as the result of different defenses. The degree of miscalibration is measured by ECE. We find that MID mostly achieves significantly lower ECE error than DP for a given robustness level.

\begin{figure}
\includegraphics[width=\columnwidth]{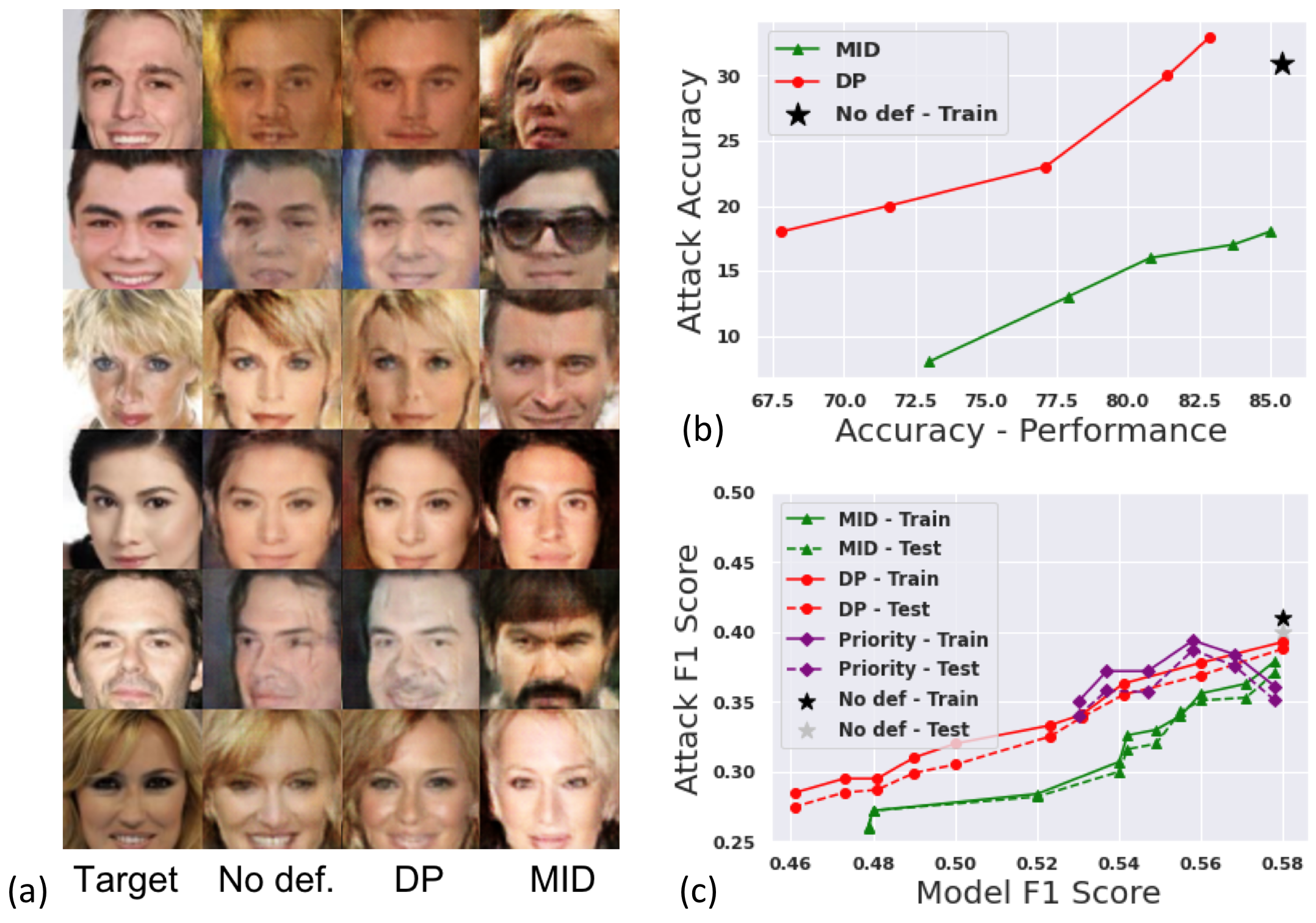}
%\begin{minipage}[c][8cm][t]{.49\columnwidth}
%  \vspace*{\fill}
%  \centering
%  \includegraphics[width=\columnwidth, height=6.2cm]{}
%  \captionof{subfigure}{}
%  \label{fig:GMI-face}
%\end{minipage}
%\begin{minipage}[c][8cm][t]{.49\columnwidth}
%  \vspace*{\fill}
%  \centering
%    \includegraphics[width=\columnwidth]{}
%  \captionof{subfigure}{}
%  \label{fig:GMI-acc}
%  \includegraphics[width=\columnwidth]{}
%  \captionof{subfigure}{}
%  \label{fig:GMI-L2}
%\end{minipage}
\caption{Defense results for whitebox MI attacks, including (a-b) GMI on neural networks and (c) MAP with counts on decision trees.}
\label{fig:whitebox}
\end{figure}

Figure~\ref{fig:whitebox} illustrates the defense results for whitebox attacks. Specifically, we consider GMI, the state-of-the-art attack against neural networks, and 
MAP with counts, the only known attack against decision trees; the corresponding privacy-utility tradeoff curves are shown in Figure~\ref{fig:whitebox} (b) and (c), respectively. We find that MID can outperform DP by a large margin to defend against whitebox MI attacks. Similar to the observation in blackbox experiment, model-specific defense strategies could achieve better robustness than MID occasionally; but overall, MID could be more preferable due to its broad applicability, thorough protection for all attributes, and reliable performance. Figure~\ref{fig:whitebox} (a) compares the reconstructed images of MID and DP. We tune the hyperparameters of both defenses so that the resulting target models have comparable performance. The test accuracies for the models trained with MID and DP are 80.8 and 81.4, respectively. We can see that MID can block the attack much better than DP. For instance, the reconstructions for DP can still retain many facial features of the target individual while the reconstructions for MID are almost completely different from the target individual. 
% Again, we omit the evaluation using other attack metrics to the Appendix as the trends are similar.

One common observation from all defense results is that the attack performance is very close on  training and test set. This implies that MI attacks pose similar privacy threats regardless of whether an individual's data is selected for training or not. Although some prior works \cite{yeom2018privacy, dwork2015generalization} report that the attack performance on the training set is much higher than the test set, we conjecture that it is because the model overfits the training data in their evaluations. In our experiments, since both MID and DP can help mitigate overfitting, the privacy threats of MI attacks is presented to the population instead of the training set only.

\section{Conclusion}
We propose a defense against MI attacks based on regularizing the mutual information between the model input and prediction and further present tractable approximations to the regularizer for linear regression, decision trees, and neural networks. We provide theoretical basis for a common empirical observation that DP cannot defend against MI attacks with reasonable model utility. We perform experiments to compare the utility-privacy tradeoff of our defense against existing baselines on different models and datasets, and demonstrate that our proposed defense can achieve the state-of-the-art performance to protect against various attacks in both whitebox and blackbox settings. 
%One limitation of our defense is that it cannot account for the computational power of the attacker. 
One limitation of our defense is that it does not prevent the attack but instead aim for better utility-privacy tradeoff. 
For future work, we would like to further improve our defense from the perspective of computational security, i.e., by assuming a polynomial time adversary.

\newpage

\bibliography{ref}

\newpage

\newcommand{\expsemonlyideal}{\mathbf{ Exp^{SEM}_{ideal}}}

\newcommand{\gainsemideal}{\gain^{\textbf{SEM}}_{\textbf{ideal}}}

\theoremstyle{break}
%\newtheorem{experiment}{Experiment}
%  \begin{@twocolumnfalse}

\section{Appendix A: Additional Experiment Details}

We give more details of datasets, evaluation metrics, as well as different attack experiments here. We also submit our code as part of the supplementary materials. 

\subsection{Datasets}
\paragraph{IWPC} The IWPC collected data on patients who were prescribed warfarin in 9 counties. 17 attributes of the patients are collected including a genotype VKORC1, which is considered as the target sensitive attributes in our paper. The outcome variable is the stable therapeutic dose of Warfarin. This dataset is considered as a benchmark dataset for linear regression inversion attack \cite{fredrikson2014privacy}. Note that in \cite{fredrikson2014privacy}, another genotype CYP2C9 is also considered as sensitive target, but since CYP2C9 has low correlation with dosage and demographic information, we show the attacking results on VKORC1 only. 

\paragraph{FiveThirtyEight} FiveThirtyEight is a survey of 553 individuals from SurveyMonkey which support the analysis of the connection between peoples' steak preparation preferences and their propensity for risk-taking behaviors. The response for the question about infidelity is considered sensitive, following the settings in \cite{fredrikson2015model}. 

\paragraph{FaceScrub} A dataset of 48,579 images for 530 individuals. We extract the face of each image according to the official bounding box information. Each image is resized to $64 \times 64$, following the settings in \cite{yang2019adversarial}. 

\paragraph{CelebA} the CelebFaces Attributes Dataset (CelebA) containing 202,599 face images of 10,177 identities with coarse alignment. We crop the images at the center and resize them to $64 \times 64$ so as to remove most background, following the settings in \cite{zhang2019secret}.

\paragraph{CIFAR10} A dataset consists of 60,000 images in 10 classes (airplane, automobile, bird, cat, deer, dog, frog, horse, ship and truck). Each image is resized to $64 \times 64$.

% \subsection{Defense Baselines for Decision Tree}
% For decision tree, we implement differentially private ID3 in \cite{friedman2010data} as a defense baseline, which chooses optimal feature split based on exponential mechanism \cite{mcsherry2007mechanism}. We also compare MID and DP with another MI attack counter measure for decision tree proposed in the original paper \cite{fredrikson2014privacy}, which we use ``Priority'' to symbolize this defense in the paper. It specifies the priority at which the sensitive feature is considered. Specifically, if the priority parameter is $l$, then the sensitive attribute (in FiveThirtyEight dataset it's the answer for infidelity), will only be considered for splitting after $l-1$ other features have already been selected.

\subsection{Utility Metrics}

%\paragraph{Expected Calibration Error}
The utility metrics depend on the underlying prediction task and datasets. For regression tasks, we employ \emph{mean squared error} (MSE); for classification tasks, we generally use \emph{test accuracy} as the metric unless the dataset is highly imbalanced in which case we use the \emph{F1 score}. For blackbox attacks on neural networks, there is a trivial defense to just release the prediction rather than the entire confidence vector and this defense achieves good utility in terms of test accuracy. However, this defense omits useful confidence information. %To capture to what extent the defense worsen the confidence output, we leverage the \emph{Expected Calibration Error} (ECE)~\citep{guo2017calibration}, a metric commonly used to measure the miscalibration between confidence and accuracy. 

To evaluate the impact of defense algorithm on the precision of confidence information, we measure the degree of model calibration in terms of \emph{Expected Calibration Error} (ECE) for each different model \cite{guo2017calibration}. We say that the defense A dominates the defense B if A can achieve better robustness than B when their model performance is fixed to any value (in some reasonable range). 
Model calibration is the degree to which a model’s predicted probability estimates the true correctness likelihood. ECE measures the difference in expected accuracy and expected confidence. In practice, we group predictions into $M$ bins, and let $B_m$ be the set of indices of samples whose prediction confidence falls into the interval $I_m = ( \frac{m-1}{M}, \frac{m}{M} ]$ for $m \in \{1 \dots, M \}$.
ECE is then computed as 
$$
\sum_{m=1}^M \frac{|B_m|}{n} |acc(B_m) - conf(B_m)|
$$
where $acc(B_m) = \frac{\sum_{i \in B_m} I[\arg\max_y \hat y_i = y_i] }{|B_m|}$
and $conf(B_m) = \frac{1}{|B_m|} \sum_{i \in B_m} \hat p_i$, where $\hat p_i$ is the confidence for sample $i$ on label $y_i$. 

\subsection{Privacy Metrics}
As for the attack performance, we follow the papers where the attacks were originally proposed and mostly use their metrics. 
There are three metrics to measure the attack performance when the target sensitive attribute is discrete: (1) \emph{accuracy}, which is the percentage of samples for which the attack algorithm correctly predicted genotype, (2) \emph{AUROC}, which is the multi-class area under receiver operating characteristic (ROC) curve, and (3) \emph{F1-score}, which is the harmonic average of attack precision and recall. This is because in both IWPC and FiveThirtyEight the distribution of the target attribute is highly unbalanced. %, e.g. 75\% of the patients have 'CT' genotype (CYP2C9) in IPWC. 
Although attack accuracy is generally easier to interpret, it can be misleading in this case.
AUROC and F1-score are intended to help evaluating when the sensitive attributes are highly imbalanced. 

For MI attacks on image recognition model, 
\emph{L2 Distance} is measured between the original and reconstructed images. 
We also build an evaluation classifier to evaluate the \emph{attack accuracy} on neural network models. If the evaluation classifier can correctly classify the reconstructed image, then the image is considered to be close to a representative image for a certain label class. The evaluation classifier is different from the target classifier and has very high performance, which serves as a label predicting oracle. For knowledge alignment experiment (FaceScrub), our evaluation network is adapted from VGG-16 \cite{simonyan2014very}. 
For GMI experiment (CelebA), we use the model in \cite{cheng2017know} which is pretrained on the MS-Celeb-1M~\cite{guo2016ms} and then fine-tuned on the training set of the target network.

\subsection{Attack Implementation \& Models}

\paragraph{Linear Regression}
We implement the MI attack against linear regression on IWPC dataset following the setting in~\citep{fredrikson2014privacy}. 

\paragraph{Decision Tree}
We implement the MI attack against decision trees using the algorithms presented in~\citep{fredrikson2015model}, including both black-box and white-box with count attack. We use the FiveThirtyEight dataset, following~\citep{fredrikson2015model}.

\paragraph{Knowledge Alignment}
In~\citep{yang2019adversarial}, a second neural network is trained as the inverse of the target model to perform the MI attack. The adversary's background knowledge is leveraged to compose an auxiliary set to train the inversion model.
Specifically, the target classifier $\classifier$ is trained using dataset $S$ drawn from the target distribution $\targetdist$. An inversion model $g$ is separately trained using another set $S_{aux}$ drawn from $\targetdist$, i.e. the \emph{auxiliary set}, where the output $f(x)$ for $x \in S_{aux}$ as the input and $x$ as the expected output. $g$ is then used to perform data reconstruction of $\classifier$'s training set and test set. 
To stage a successful attack, the attacker model often used is that the full prediction values $\hat y$ are given, and the adversary also has the access to auxiliary dataset $S_{aux}$ that is drawn from the exactly same distribution as the original training set $S$. Note that this assumption is overly strong as pointed out by the authors in the original paper. To obtain meaningful attack results and compare with \citep{yang2019adversarial}, our experiments follow this assumption. The architectures for the target classifier $f$ and the inversion model $g$ are exactly the same as the ones used in~\citep{yang2019adversarial}.  

%We use CNN as the target classifier $f$, which includes 4 CNN blocks where each block consists of a convolutional layer followed by a batch normalization layer, a max-pooling layer and a ReLU activation layer. Two fully-connected layers are added after the CNN blocks, and followed by Softmax activation function. The inversion model $g$ consists of 5 transposed CNN blocks. The first 4 blocks each has a transposed convolutional layer succeeded by a batch normalization layer and a Tanh activation function. The last block has a transposed convolutional layer followed by a Sigmoid activation function. 

\paragraph{Update-Leaks}
We use a simple CNN which consists of two convolutional layers, one max pooling layer, one fully connected layer, and a softmax layer trained on CIFAR10 as the target model. We implement the attack algorithm with the exact same settings as in \cite{salem2019updates}. 

\paragraph{Generative Model Inversion (GMI)}
We use VGG16 trained on CelebA dataset as the target model, and implement the GMI attack following the same settings in \cite{zhang2019secret}. 

\begin{figure}[t!]
  \centering
  \subfloat[]{
    \includegraphics[width=0.3\textwidth]{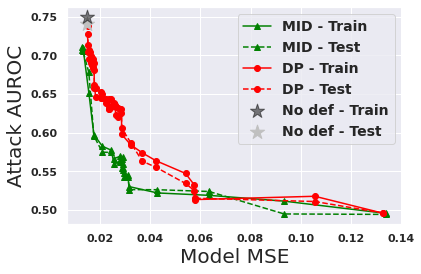}
    \label{fig:linearreg-auroc}
  }\\
  \subfloat[]{
        \includegraphics[width=0.3\textwidth]{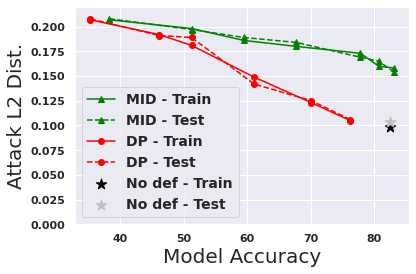} 
        \label{fig:knowledgealign-l2}
  } \\
  \subfloat[]{
        \includegraphics[width=0.3\textwidth]{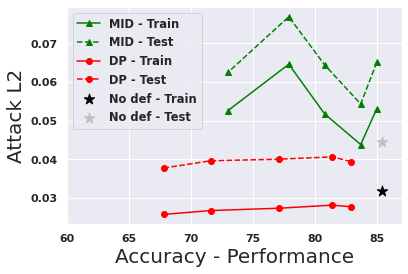} 
        \label{fig:gmi-l2}
  }
  \caption{Defense results for blackbox and whitebox MI attacks measured with additional metrics, including (a) Naive MAP against linear regression with AUROC as the attack performance metric; (b) Knowledge Alignment against a neural network with deep feature L2-distance as the attack performance metric (c) GMI on a neural network with deep feature L2-distance as the attack performance metric.}
  \label{fig:additional}
\end{figure}

\subsection{Additional Results} We present defense results for blackbox and whitebox MI attacks measured with additional metrics. Figure~\ref{fig:additional} shows that our proposed defense outperforms DP on these additional metrics. Comparing Figure~\ref{fig:additional} (b) with (c), we found that training data are subject to a higher privacy loss than test data for the model attacked by GMI, while the privacy loss of training and test data is about the same for the model attacked by Knowledge distillation. One potential reason is that the model in (c) has more parameters and thus is subject to more  overfitting. If the model overfits to training data, the model tends to learn the input-output correlation better for training data than test data and as a result, training data is more susceptible to MI attacks. Another potential reason is that the attack algorithm in (c) is more powerful as it is a whitebox attack while the attack in (b) only has blackbox access to the model. Hence, the attack in (c) can better reflect the subtle difference between training and test data in terms of their vulnerability. We leave the validation of these hypotheses to future work.

%\onecolumn

% \begin{multicols}{3}
\section{Appendix B: Additional Theoretical Results}

The definition of differential privacy can be generalized to datasets that differ on more than a single entry. We say that two datasets $S, S' \in \mathcal{X}^n$ are \emph{$k$-adjacent} if they differ by $k$ entries, and we denote this by $S \sim_k S'$. 
\begin{theorem}[Group Privacy]
\label{thm:groupdp}
For every $k \ge 1$, if $\mechanism: \mathcal{X}^n \rightarrow \mathcal{R}$ is $(\epsilon, \delta)$-differentially private, then for every two $k$-adjacent datasets $S \sim_k S'$, and every subset $R \subseteq \mathcal{R}$, we have
$$
Pr[\mechanism(S) \in R] \le e^{k\epsilon} Pr[\mechanism(S') \in R] + 
\frac{e^{k\epsilon}-1}{e^\epsilon-1} \delta
$$
where the bound is tight.
\end{theorem}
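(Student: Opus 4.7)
The plan is to prove the group privacy theorem by a simple induction on $k$, using the $1$-adjacent definition of $(\epsilon,\delta)$-DP as the base case. The induction is natural because any two $k$-adjacent datasets $S \sim_k S'$ can be joined by a chain $S = S_0, S_1, \ldots, S_k = S'$ in which every consecutive pair $S_{i-1} \sim_1 S_i$ differs in exactly one entry (flip the $k$ mismatched coordinates one at a time).

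First I would establish the base case $k=1$, which is literally Definition~\ref{def:dp}. For the inductive step, assume the bound for $k-1$, then apply the $1$-step DP guarantee along the chain:
\begin{align*}
Pr[\mechanism(S)\in R] &\le e^{\epsilon}\, Pr[\mechanism(S_1)\in R] + \delta \\
&\le e^{\epsilon}\!\left(e^{(k-1)\epsilon}\, Pr[\mechanism(S')\in R] + \tfrac{e^{(k-1)\epsilon}-1}{e^{\epsilon}-1}\delta\right) + \delta.
\end{align*}
Collecting terms gives $e^{k\epsilon}\,Pr[\mechanism(S')\in R] + \bigl(\tfrac{e^{k\epsilon}-e^{\epsilon}}{e^{\epsilon}-1}+1\bigr)\delta$, and the parenthetical simplifies to $\tfrac{e^{k\epsilon}-1}{e^{\epsilon}-1}$, which is the telescoping geometric sum $\sum_{i=0}^{k-1}e^{i\epsilon}$. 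This closes the induction.

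For tightness, I would exhibit a concrete mechanism achieving equality. A standard choice is a mechanism on the Boolean domain $\{0,1\}$ that mimics the worst-case behavior of $(\epsilon,\delta)$-DP: e.g., the ``truncated Laplace'' / randomized-response style mechanism where on input $0$ it outputs a distinguishing event with probability $p$ and on input $1$ with probability $q$ so that $p = e^{\epsilon}q + \delta$ is saturated. Chaining $k$ such one-step inequalities tightly (by choosing intermediate datasets so that each step saturates the DP bound) produces a family of datasets for which the group-privacy inequality becomes an equality, confirming that the constant $\tfrac{e^{k\epsilon}-1}{e^{\epsilon}-1}$ cannot be improved.

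The routine part is the algebra of the telescoping sum; there is no real analytic subtlety. The only mildly delicate point is the tightness construction, since one must verify that a single mechanism and event $R$ simultaneously saturate all $k$ one-step inequalities along the chain. This is handled by the randomized-response-style construction above, where every one-step ratio on the relevant event is exactly $e^{\epsilon}$ plus an additive $\delta$, so no slack accumulates in the induction.
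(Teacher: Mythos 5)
Your proof is correct. The paper itself states this group-privacy property as a known result and gives no proof of it (it is only invoked in the proof of Theorem 2), so there is nothing to diverge from; your chain-and-induct argument with the telescoping geometric sum $\sum_{i=0}^{k-1} e^{i\epsilon} = \frac{e^{k\epsilon}-1}{e^{\epsilon}-1}$ is exactly the standard derivation. The only caveat worth recording is on tightness: the saturating randomized-response construction requires $\delta$ small enough that the chained probabilities $p_i = e^{\epsilon}p_{i+1} + \delta$ remain in $[0,1]$ and that the resulting mechanism is verified to be $(\epsilon,\delta)$-DP for \emph{every} adjacent pair (not only those along the chain), which your sketch acknowledges but does not fully carry out.
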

Another central property of differential privacy is that it is preserved under post-processing. 
\begin{theorem}[Differential Privacy under Post-processing]
\label{thm:postprocessing}
If $\mechanism: \mathcal{X}^n \rightarrow \mathcal{R}$ is $(\epsilon, \delta)$-differentially private, and $\mathcal{F}: \mathcal{R} \rightarrow \mathcal{Z}$ is any randomized function, then $\mathcal{F} \circ \mechanism: \mathcal{X}^n \rightarrow \mathcal{Z}$ is $(\epsilon, \delta)$-deferentially private. 
\end{theorem}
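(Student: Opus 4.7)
The plan is to prove the post-processing closure by reducing to the deterministic case via conditioning on the internal randomness of $\mathcal{F}$. First I would represent any randomized $\mathcal{F}: \mathcal{R} \rightarrow \mathcal{Z}$ as $\mathcal{F}(r) = g(r, \omega)$, where $g$ is deterministic and $\omega$ is an auxiliary random seed drawn from some distribution $\mu$ on a sample space $\Omega$, independently of $\mechanism$. This standard representation lets me treat the randomness of the post-processing as external and conditionable.

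Next, for the deterministic-$g$ case, I would fix an arbitrary measurable subset $Z \subseteq \mathcal{Z}$ and, for each $\omega \in \Omega$, define the preimage $R_\omega = \{ r \in \mathcal{R} : g(r, \omega) \in Z \} \subseteq \mathcal{R}$. For any pair of adjacent datasets $S \sim S'$, the identity
\begin{equation}
\Pr[\, g(\mechanism(S), \omega) \in Z \,] = \Pr[\, \mechanism(S) \in R_\omega \,]
\end{equation}
holds, and invoking the $(\epsilon, \delta)$-DP guarantee of $\mechanism$ on the event $R_\omega$ yields $\Pr[\mechanism(S) \in R_\omega] \le e^{\epsilon} \Pr[\mechanism(S') \in R_\omega] + \delta$.

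Finally, I would re-introduce the seed $\omega$ by integrating both sides against $\mu$. Since $\omega$ is independent of $\mechanism$, Fubini's theorem gives
\begin{equation}
\Pr[\mathcal{F}(\mechanism(S)) \in Z] = \int_\Omega \Pr[\mechanism(S) \in R_\omega]\, d\mu(\omega) \le e^{\epsilon} \int_\Omega \Pr[\mechanism(S') \in R_\omega]\, d\mu(\omega) + \delta,
\end{equation}
and the right-hand side collapses to $e^{\epsilon} \Pr[\mathcal{F}(\mechanism(S')) \in Z] + \delta$, which is exactly the $(\epsilon, \delta)$-DP condition for $\mathcal{F} \circ \mechanism$.

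The main obstacle, though a mild one, is measurability: I need $(r, \omega) \mapsto g(r, \omega)$ to be jointly measurable so that each $R_\omega$ is a legitimate event for $\mechanism$ and so that Fubini applies to the integrals above. In the standard Borel setting implicitly assumed throughout the DP literature this is automatic, so I would state the assumption once and proceed. There is no deeper difficulty: the entire argument amounts to a single application of the DP inequality inside an integral, together with the observation that post-processing cannot create a finer-grained distinguishing event than was already available in the output of $\mechanism$.
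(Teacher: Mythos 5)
Your proof is correct. Note that the paper itself gives no proof of this statement: it imports post-processing as a standard property from the differential privacy literature (it is Proposition 2.1 in the Dwork--Roth monograph the paper cites) and uses it only as a black box inside the proof of Theorem 2, so there is no in-paper argument to compare against. Your argument is the standard one --- reduce to deterministic post-processing by conditioning on the seed $\omega$, apply the $(\epsilon,\delta)$ inequality to the preimage event $R_\omega$, and average over $\omega$ --- and it goes through; the one step worth making explicit is that the additive $\delta$ survives the averaging only because $\mu$ is a probability measure, so that $\int_\Omega \delta \, d\mu(\omega) = \delta$, which your final display uses implicitly. The measurability caveat is correctly identified and is harmless in the standard Borel setting.
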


% \end{multicols}

\subsection{Discussion of the trivial baseline for MI attacks}
There are several possible ways to define the trivial baseline of the attack gain. A common practice in cryptography is to define a simulator $\mathcal{S}$ which has access to everything else but the target model $\classifier$: 
\begin{equation}
     \mathcal{S}(\analyst, \targetdist, \tau) = Pr_{\analyst, (x_s, x_{ns}, y) \sim \targetdist} [\analyst_{x_{ns}, y} = \tau(x_s)] 
\end{equation}
and the advantage of $\analyst$ can be defined as \begin{equation}
     \advsem_{\mathcal{S}} (\analyst, \classifier, \targetdist, \tau) 
     = \gainsem - \mathcal{S}(\analyst, \targetdist, \tau)
\end{equation}
Another potential baseline definition is to define an alternative underlying distribution $\tgtind$ such that $X_s$ and $(X_{ns}, Y)$ have no correlation with each other. 
That is, sampling $x_s \sim \targetdist_{X_s}$ independently from $X_{ns}$ and $Y$. 
We then define another classifier $f^*$ that is trained from a training set $S$ draws from $\tgtind$, and define the advantage of $\analyst$ to be 
\begin{equation}
\label{eq:advwrong}
\begin{split}
     \advsem (\analyst, \classifier, \targetdist, &\tau) = \gainsem \\
     &- \gainsemonly(\analyst, \mechanism, \tgtind, \tau)
\end{split}
\end{equation}
This definition captures the intuition that if $X_{s}$ has little correlation with $X_{ns}$ and $Y$, the advantage of the attacker should be small no matter which strategy he/she uses and which classifier $f$ it has access to. We can also define an alternative experiment: 
\begin{experiment}
[Ideal World $\expsemonlyideal (\analyst, \mechanism, \targetdist, \tau)$]
\begin{enumerate}
    \item[]
    \item Server draws a training set $S \sim (\tgtind)^n$, and trains a classifer $f \leftarrow \mechanism(S)$. 
    \item Server draws $z=(x_{ns},x_s, y) \sim \targetdist$. $(f, x_{ns}, y)$ is presented to the adversary $\analyst$. 
    \item The adversary outputs $\analyst_{x_{ns}, y}$. $\expsemonlyideal(\analyst, \mechanism, \targetdist, \tau)$ is $1$ if $\analyst_{x_{ns}, y} = \tau(x_s)$, and $0$ otherwise. 
\end{enumerate}
\end{experiment}
In this experiment, the classifier $f$ still provides correlation information between $X_{ns}$ and $Y$ but no longer contain information for the correlation between $X_s$ and $(X_{ns}, Y)$, thus it's information-theoretically hard to compute $\tau(x_s)$. Using this experiment, we can have another definition of the advantage of $\analyst$:
\begin{equation}
\label{eq:advideal}
\begin{split}
     &\advsem (\analyst, \classifier, \targetdist, \tau) \\
     &= \gainsem - \gainsemideal (\analyst, \mechanism, \targetdist, \tau)
\end{split}
\end{equation}
%Later in this paper, we use the definition from Equation \ref{eq:baseline2}. % for the ease of analysis.

\subsection{Proof of Theorem 1}
\begin{theorem}
For any attack strategy $\analyst^*$, learning algorithm $\mechanism$, target distribution $\targetdist$ and property function $\tau$,
\begin{equation}
\advsem\!(\analyst^*, \mechanism, \targetdist, \tau)\! \le 2\max_\analyst \gainindonly \!(\analyst, \mechanism, \targetdist)\! -\! 1 
\end{equation}
for 
\begin{align}
\advsem (\analyst, \classifier, \targetdist, \tau) &= \gainsem \\
&~~~~- \gainsemideal (\analyst, \mechanism, \targetdist, \tau)
\end{align}
\end{theorem}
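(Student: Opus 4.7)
\medskip

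\noindent\textbf{Proof plan for Theorem~\ref{thm:inddptosem}.}
The plan is a standard hybrid/reduction argument: given any semantic MI adversary $\analyst^*$, I will construct an indistinguishability adversary $\analyst'$ whose winning probability in $\expinddp$ is exactly $\tfrac{1}{2} + \tfrac{1}{2}\advsem(\analyst^*,\mechanism,\targetdist,\tau)$. Rearranging then gives $\advsem(\analyst^*,\mechanism,\targetdist,\tau) = 2\,\gainindonly(\analyst',\mechanism,\targetdist) - 1 \le 2\max_\analyst \gainindonly(\analyst,\mechanism,\targetdist) - 1$, which is exactly the claim. The right-hand side of the theorem is therefore achieved by $\analyst'$, and the bound is not only valid but tight in the reduction.

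\medskip

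\noindent\textbf{Construction.} Given $\analyst^*$, define the IND adversary $\analyst'$ as follows. On input a classifier $f$ produced by $\mechanism$: (i) sample a fresh point $(x_s, x_{ns}, y) \sim \targetdist$ internally, (ii) simulate $\analyst^*$ on input $(f, x_{ns}, y)$ and obtain an output $v$, (iii) output $b' = 0$ if $v = \tau(x_s)$ and $b' = 1$ otherwise. This reduction uses only the ability to sample from $\targetdist$ and to invoke $\analyst^*$ as a black box, so it is valid regardless of $\mechanism$.

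\medskip

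\noindent\textbf{Key calculation.} Expand $\gainindonly(\analyst',\mechanism,\targetdist) = \tfrac{1}{2}\Pr[b'=0 \mid b=0] + \tfrac{1}{2}\Pr[b'=1 \mid b=1]$. Conditioned on $b=0$ the training set is drawn from $p^n$, so by construction $\Pr[b'=0 \mid b=0] = \Pr[v = \tau(x_s)] = \gainsem(\analyst^*,\mechanism,\targetdist,\tau)$. Conditioned on $b=1$ the training set is drawn from $(\tgtind)^n$ while the challenge $(x_s,x_{ns},y)$ is still drawn from $\targetdist$ inside $\analyst'$; this is precisely the ideal-world experiment $\expsemonlyideal$, so $\Pr[b'=1 \mid b=1] = 1 - \gainsemideal(\analyst^*,\mechanism,\targetdist,\tau)$. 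Combining the two terms gives
\begin{equation*}
\gainindonly(\analyst',\mechanism,\targetdist) \;=\; \tfrac{1}{2} + \tfrac{1}{2}\bigl(\gainsem - \gainsemideal\bigr) \;=\; \tfrac{1}{2} + \tfrac{1}{2}\advsem(\analyst^*,\mechanism,\targetdist,\tau),
\end{equation*}
and the theorem follows upon upper-bounding $\gainindonly(\analyst',\cdot)$ by $\max_\analyst \gainindonly(\analyst,\cdot)$.

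\medskip

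\noindent\textbf{Main obstacle.} The argument itself is short; the only delicate point is making sure the reduction faithfully embeds both the real and ideal SEM experiments into the two branches of the IND game. Concretely, I need to verify that the marginal distribution over $(S, f, x_s, x_{ns}, y)$ seen inside $\analyst'$ in branch $b=0$ matches $\expsem$ and in branch $b=1$ matches $\expsemonlyideal$. This holds because the challenge point $(x_s, x_{ns}, y) \sim \targetdist$ that $\analyst'$ draws is independent of the training set in both branches, exactly as in the two SEM experiments. A secondary subtlety is simply bookkeeping between the two equivalent definitions of $\advsem$ used in the main text and the appendix, namely $\gainsemonly(\analyst,\mechanism,\tgtind,\tau)$ versus $\gainsemideal$; these differ only in whether the test point is drawn from $\tgtind$ or $\targetdist$, and the reduction above can be written for either choice without affecting the $2x-1$ bound.
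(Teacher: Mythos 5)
Your proposal is correct and follows essentially the same reduction as the paper's own proof: construct an indistinguishability adversary that internally samples a challenge point, runs $\analyst^*$, and outputs $b'=0$ exactly when $\analyst^*$ wins, then compute its gain as $\tfrac{1}{2}+\tfrac{1}{2}\advsem$ and take the max. Your write-up is in fact slightly more careful than the paper's, since you make explicit that the $b=1$ branch embeds the ideal-world experiment (training set from the product distribution, challenge still from $\targetdist$), which is the version of the baseline the stated theorem uses.
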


\begin{proof}
For any strategy $\analyst^*$ in for the semantic game $\expsem$, we can construct $\analyst$ for $\expinddp$ where for any trained classifier $\classifier$, $\analyst$ outputs $0$ if $\analyst^*$ wins on an instance $(x_s, x_{ns}, y) \sim p$. 
The gain of this strategy is 
\begin{align} 
    &\gainindonly(\analyst, \mechanism, \targetdist) = 
    Pr[\expinddp=1] \\
    &= Pr[\analyst(\mechanism(S))=0|b=0]Pr[b=0] \\
    &~~~~+ Pr[\analyst(\mechanism(S))=1|b=1]Pr[b=1] \\
    &= \frac{1}{2}Pr_{S\sim (p)^n, \mechanism, \analyst} [\analyst(\mechanism(S))=0] \\
    &~~~~+ 
    \frac{1}{2} Pr_{S \sim (\tgtind)^n, \mechanism, \analyst}[\analyst(\mechanism(S))=1] \\
    &= \frac{1}{2}Pr_{S\sim (p)^n, \mechanism, \analyst} [\analyst(\mechanism(S))=0] \\
    &~~~~+ 
    \frac{1}{2} [1-Pr_{S \sim (\tgtind)^n, \mechanism, \analyst}[\analyst(\mechanism(S))=0]] \\
    &= \frac{1}{2}Pr_{S\sim (p)^n, \mechanism,  \analyst^*} [\analyst^*_{x_{ns}, y} = \tau(x_s)] \\
    &~~~~+
    \frac{1}{2} [1-Pr_{S \sim (\tgtind)^n, \mechanism, \analyst^*}[\analyst^*_{x_{ns}, y} = \tau(x_s)]] \\
    %&\ge \frac{1}{2}Pr_{S\sim (p)^n, \mechanism, (x_s, x_{ns}, y) \sim p, \analyst^*} [\analyst^*_{x_{ns}, y} = \tau(x_s)] \\
    %&~~~+ 
    %\frac{1}{2} [1-Pr_{S \sim (\tgtind)^n, \mechanism, (x_s, x_{ns}, y) \sim \tgtind, \analyst^*}[\analyst^*_{x_{ns}, y} = \tau(x_s)]] \\
    &= \frac{1}{2} \gainsemonly(\analyst^*, \mechanism, \targetdist, \tau) \\
    &~~~~+ \frac{1}{2} (1-\gainsemideal (\analyst, \mechanism, \targetdist, \tau)) \\
    &= \frac{1}{2} + \frac{1}{2} \advsem(\analyst^*, \mechanism, \targetdist, \tau)
\end{align}
where in (15) and (16) we omit the randomness of $(x_s, x_{ns}, y) \sim p$ in the probability space due to space constraint.
%where $Pr[Exp^{SEM}(\analyst^*, \classifier, u_T \times \targetdist_{X, Y})=0] = 1-\frac{1}{|\mathcal{T}|}$ since in this case $T$ and $(X, Y)$ are mutually independent in the ground truth and hence for any $\analyst^*$ the gain will be the same as random guess. 
By the above result, we have 
\begin{equation}
\begin{split}
\max_\analyst \gainindonly(\analyst, \mechanism, \targetdist) \ge 
\frac{1}{2} + \frac{1}{2} \advsem(\analyst^*, \mechanism, \targetdist, \tau)
\end{split}
\end{equation}
which holds for any attack strategy $\analyst^*$.
\end{proof}

\subsection{Proof of Theorem 2}

We will provide a tight bound of $\gainindonly$ for any attack strategy $\analyst$ when the learning algorithm is differentially private. 
In the analysis, we assume that with high probability a training set drawn from $p^n$ has no intersection with another set drawn from $( \tgtind )^n$, i.e.,
\begin{equation}
\label{eq:assumption}
Pr_{S \sim p^n, S' \sim ( \tgtind )^n}[S \sim_n S'] = 1 - \gamma
\end{equation}
where $S \sim_n S'$ indicates that the two datasets $S, S' \in \mathcal{X}^n$ differ by  $n$ entries, and $\gamma$ is a small value. 
This assumption is plausible for many practical scenarios where the feature vector has continuous domain or is high-dimensional. 

%For two datasets $S \sim p^n$ and $S' \sim (u_T \times p_{X, Y})^n$, a simple analysis can obtain that $Pr_{S \sim p^n, S' \sim (u_T \times p_{X, Y})^n}[S \sim_n S'] = 1-O(\frac{1}{n})$. 
% For most of the cases in practice, we have $S \sim_n S'$ (e.g. continuous case or attributes has many different classes), hence for simplicity, in the following analysis we assume 
% \begin{equation}
% Pr_{S \sim p^n, S' \sim ( \tgtind )^n}[S \sim_n S'] = 1
% \end{equation}

\begin{theorem}
%\label{thm:dpguarantee}
If the learning algorithm $\mechanism: \mathcal{X}^n \rightarrow \mathcal{R}$ is $(\epsilon, \delta)$-differentially private, then with probability at least $1-\gamma$ we have tight bound 
\begin{equation}
\max_\analyst gain^{IND\dpname}(\analyst, \mechanism, \targetdist) \le \frac{e^{n\epsilon}}{e^{n\epsilon}+1} + \frac{e^{n\epsilon}-1}{(e^{n\epsilon}+1)(e^\epsilon-1)} \delta. \nonumber
\end{equation}
\end{theorem}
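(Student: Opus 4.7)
The plan is to reduce the statement to a two-dimensional linear program in the pair $(q_0, q_1)$ of probabilities that the adversary outputs $0$ under the two hypotheses, and to obtain the feasible region by combining the group-privacy consequence of $(\epsilon, \delta)$-DP with the $n$-adjacency assumption. First I would set up the reduction. By post-processing of differential privacy, the composition $\analyst \circ \mechanism$ remains $(\epsilon, \delta)$-DP, so without loss of generality I may identify $\analyst$ with a measurable set $R$ of classifiers on which it outputs $0$. Writing $q_0 = \Pr_{S \sim \targetdist^n}[\mechanism(S) \in R]$ and $q_1 = \Pr_{S' \sim (\tgtind)^n}[\mechanism(S') \in R]$, conditioning on $b$ and expanding $\Pr[\expinddp = 1]$ gives $\gainindonly(\analyst, \mechanism, \targetdist) = \tfrac{1}{2} + \tfrac{1}{2}(q_0 - q_1)$, so the problem reduces to upper bounding $q_0 - q_1$ across all measurable $R$.

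Next I would translate $(\epsilon, \delta)$-DP into linear constraints on $(q_0, q_1)$. Let $\alpha = e^{n\epsilon}$ and $\beta = (e^{n\epsilon} - 1)\delta/(e^\epsilon - 1)$. On the event $E = \{S \sim_n S'\}$, which holds with probability at least $1 - \gamma$ by the assumption, the standard group-privacy amplification with $k = n$ gives pointwise $\Pr[\mechanism(S) \in R] \le \alpha \Pr[\mechanism(S') \in R] + \beta$. Taking expectation over the conditional law of $(S, S')$ given $E$ (legal because $\mechanism$'s internal randomness is independent of the data draws) yields $q_0 \le \alpha q_1 + \beta$. Replacing $R$ by its complement $R^c$ gives $1 - q_0 \le \alpha(1 - q_1) + \beta$, and swapping the roles of the two distributions (using symmetry of $n$-adjacency) produces $q_1 \le \alpha q_0 + \beta$ and $1 - q_1 \le \alpha(1 - q_0) + \beta$. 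Maximising the linear objective $q_0 - q_1$ over these four inequalities intersected with $[0,1]^2$ is a two-dimensional LP: inspecting the vertices, the optimum sits where the two diagonal constraints $q_0 - \alpha q_1 = \beta$ and $\alpha q_0 - q_1 = \alpha - 1 + \beta$ are simultaneously tight (the other two can be checked to be slack for $\alpha \ge 1$). Solving this $2 \times 2$ system yields $q_0^\star = (\alpha + \beta)/(\alpha + 1)$ and $q_1^\star = (1 - \beta)/(\alpha + 1)$, so
\[
\max_\analyst \gainindonly(\analyst, \mechanism, \targetdist) \le \tfrac{1}{2} + \tfrac{1}{2}(q_0^\star - q_1^\star) = \frac{\alpha + \beta}{\alpha + 1},
\]
which unpacks to exactly the expression in the theorem.

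The main obstacle is the tightness half: to show that the inequality cannot be strengthened I would exhibit an $(\epsilon, \delta)$-DP mechanism and a target distribution $\targetdist$ for which some $\analyst$ actually attains $\gainindonly = (\alpha + \beta)/(\alpha + 1)$. The natural candidate is a randomized-response style mechanism on a one-bit sufficient statistic distinguishing $\targetdist^n$ from $(\tgtind)^n$, calibrated so that its output distribution on the two training regimes sits exactly at the vertex $(q_0^\star, q_1^\star)$ and saturates all four group-privacy inequalities at once. The delicate point is to verify that its single-entry DP profile is precisely $(\epsilon, \delta)$ rather than something weaker, so that the group amplification to $(n\epsilon, \beta)$ is indeed extremal. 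A secondary subtlety is to ensure the $\gamma$-slack from the $n$-adjacency assumption does not re-enter the final bound: working with expectations conditioned on $E$ yields the clean $\alpha, \beta$ bound without a spurious additive $O(\gamma)$ term, which is what justifies the theorem's "with probability $1 - \gamma$" phrasing.
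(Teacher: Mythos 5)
Your proposal is correct and takes essentially the same route as the paper: post-processing reduces the adversary to a binary test on $\mechanism(S)$, group privacy with $k=n$ (valid on the probability-$(1-\gamma)$ event that $S$ and $S'$ are $n$-adjacent) supplies exactly the two inequalities $q_0 \le \alpha q_1 + \beta$ and $1-q_1 \le \alpha(1-q_0)+\beta$ that the paper also uses, and your LP-vertex computation is the same algebra as the paper's step of bounding the gain by $e^{n\epsilon}(1-\mathrm{gain})+\frac{e^{n\epsilon}-1}{e^\epsilon-1}\delta$ and solving for the gain. The tightness half that you flag as the main obstacle is likewise not established in the paper (its proof only derives the upper bound, relying on tightness of the group-privacy lemma), so your unfinished matching construction is not a gap relative to the paper's own argument.
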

\begin{proof}
By the post-processing property in Theorem \ref{thm:postprocessing}, for any attack strategy $\analyst: \mathcal{H} \rightarrow \{0, 1\}$ we know that $\analyst \circ \mechanism $ is also $(\epsilon, \delta)$-differentially private. By group privacy property in Theorem \ref{thm:groupdp} and our assumption in \ref{eq:assumption}, with probability at least $1-\gamma$, dataset $S \sim \targetdist^n$ and $S' \sim (\tgtind)^n$ are differ by $n$ records, and for any subset $R \subseteq \{0, 1\}$ we have 
\begin{equation}
\begin{split}
&Pr_{\analyst, \mechanism, S \sim p^n}[\analyst(\mechanism(S)) \in R] \\
&\le e^{n\epsilon}
Pr_{\analyst, \mechanism, S' \sim (\tgtind)^n}[\analyst(\mechanism(S')) \in R] \\
&~~~~+ \frac{e^{n\epsilon}-1}{e^\epsilon-1} \delta
\end{split}
\end{equation}
Thus we have
\begin{align}
&Pr[\expinddp=1] \\
& = Pr[\analyst(\mechanism(S))=0 | b=0] Pr[b=0]
\\
&~~~~+ Pr[\analyst(\mechanism(S'))=1 | b=1] Pr[b=1] \\
&= \frac{1}{2}Pr_{S\sim p^n}[\analyst(\mechanism(S))=0] \\
&~~~~+ 
\frac{1}{2} Pr_{S'\sim (\tgtind)^n}[\analyst(\mechanism(S'))=1] \\
&\le \frac{1}{2}e^{n\epsilon}Pr_{S'\sim (\tgtind)^n}[\analyst(\mechanism(S'))=0] \nonumber \\
&~~~~+ \frac{1}{2}e^{n\epsilon}Pr_{S\sim p^n}[\analyst(\mechanism(S))=1] + \frac{e^{n\epsilon}-1}{e^\epsilon-1} \delta \\
&= e^{n\epsilon} Pr[\expinddp=0] + \frac{e^{n\epsilon}-1}{e^\epsilon-1} \delta \\
&= e^{n\epsilon} (1-Pr[\expinddp=1]) + \frac{e^{n\epsilon}-1}{e^\epsilon-1} \delta
\end{align}
Hence for any $\analyst$ we have $Pr[\expinddp=1] \le \frac{e^{n\epsilon}}{e^{n\epsilon}+1}
+ \frac{e^{n\epsilon}-1}{(e^{n\epsilon}+1)(e^\epsilon-1)} \delta
$.
\end{proof}

\end{document}